\newcommand\defaccr[2]{\newcommand#1{#2\xspace}}
\newcommand\defmath[2]{\newcommand#1{\ensuremath{#2}\xspace}}
\newcommand\concept[1]{\textit{#1}}
\defmath\CG{\mathcal C_n}
\defaccr\ERROR{Wrong} 
\defaccr\timeout{$>300$}
\defaccr\ecmc{\texttt{ECMC}}
\defaccr\qcec{\textmd{QCEC}}
\defmath\PG{\mathcal{\hat P}_n}
\defmath\hP{{\hat P}}
\defmath\hQ{{\hat Q}}
\defmath\bell{\ket{\mathrm{Bell}}}
\defmath\pauli{\mathcal P}
\def\namedlabel#1#2{\begingroup
    #2%
    \def\@currentlabel{#2}%
    \phantomsection\label{#1}\endgroup
}
\patchcmd{\ALG@step}{\addtocounter{ALG@line}{1}}{\refstepcounter{ALG@line}}{}{}
\newcommand{\ALG@lineautorefname}{Line}
\renewcommand\phi{\varphi}
\let\set\undefined
\providecommand{\tuple}[1]{\ensuremath{\left( #1 \right)}}
\providecommand{\set}[1]{\ensuremath{\left\lbrace #1 \right\rbrace}}
\defmath{\bool}{\ensuremath{\mathbb{B}}}
\defmath{\complex}{\ensuremath{\mathbb{C}}}
\defmath{\integers}{\ensuremath{\mathbb{Z}}}
\defmath{\conditionalind}{\mathrel{\text{\scalebox{1.07}{$\perp\mkern-10mu\perp$}}}}
\defmath{\dx}{\partial x}
\defmath{\ddx}{\sfrac{\partial}{\partial x}}
\defmath{\half}{\textstyle{\frac{1}{2}}}
\defmath\Exists{\mathit{Exists}}
\defmath\PlusExists{\mathit{PlusExists}}
\defmath\calciso{\mathsf{calciso}}
\newcommand{\defn}{\,\triangleq\,}
\tikzstyle{oval} = [state, ellipse, minimum size=4mm, inner sep=0.5mm, node distance=1cm]
\tikzstyle{leaf}=[draw, rectangle,minimum size=5mm, inner sep=3pt]
\tikzstyle{var}=[circle,draw=black!70,solid,thick,minimum size=6mm]
\tikzstyle{bdd}=[regular polygon, regular polygon sides=3, draw=black!70,solid,thick,inner sep=0.5mm]
\tikzstyle{n}=[->,loosely dashed,thick]
\tikzstyle{p}=[->,solid,thick]
\tikzstyle{b}=[->,densely dashdotted,ultra thick]
\defmath\before{\prec}
\defmath\beforeq{\preccurlyeq}
\newenvironment{smallmat}{\left[\begin{smallmatrix}}{\end{smallmatrix}\right]}
\defmath\oh{\mathcal O}
\defmath\gmax{g}
\defmath\kmax{\kappa^{\textnormal{final}}}
\defmath\cast{\mathbb C^\ast}
\DeclareRobustCommand{\leafnode}[1][]{%
  \raisebox{-.8mm}{%
  \tikz{%
    \node[state,inner sep=0pt,minimum size=10pt,right= of x,leaf](v){\scriptsize $1$};%
  }%
  }%
}
\newcommand\mat[1]{\begin{bmatrix*}[r]
                    #1
                    \end{bmatrix*}}
\defmath\ww{\begin{smallmat}
      0 & y   \\
      y^* & 0  \\
  \end{smallmat}
}
\def\strictsuccinctto{
    \setbox0\hbox{
            $\longrightarrow$
    }\copy0\llap{\raise\ht0\hbox{
    {
    $    \hspace{0mm}\mathclap{\longleftarrow}{\hspace{-1.5mm}\times}\hspace{0mm}$
    }
    }}
}
\newcommand{\noinfo}{\ding{53}}
\begin{document}

\title{Equivalence~Checking~of\\Quantum~Circuits~by~Model~Counting}

\authorrunning{J. Mei, T. Coopmans, M. Bonsangue \& A. Laarman}
\author{Jingyi Mei, Tim Coopmans, Marcello Bonsangue and Alfons Laarman}
\institute{\vspace{-2em}}

\maketitle

\begin{abstract}
Verifying equivalence between two quantum circuits is a hard problem, that is nonetheless crucial in compiling and optimizing quantum algorithms for real-world devices.
This paper gives a Turing reduction of the (universal) quantum circuits equivalence problem to weighted model counting (WMC).
Our starting point is a folklore theorem showing that equivalence checking of quantum circuits can be done in the so-called Pauli-basis. We combine this insight with a WMC encoding of quantum circuit simulation, which we extend with support for the Toffoli gate.
Finally, we prove that the weights computed by the model counter indeed realize the reduction.
With an open-source implementation, we demonstrate that this novel approach can outperform a state-of-the-art equivalence-checking tool based on ZX calculus and decision diagrams. 
\end{abstract}

\keywords{Quantum computing \and Circuit equivalence \and Satisfiability \and \#SAT \and Weighted model counting \and Pauli basis.\vspace{-1em}}

\section{Introduction}
\label{sec:introduction}

Physicists and chemists regularly deal with `quantum \NP'-hard problems when finding the ground state (energy) of a physical system~\cite{kitaev1997quantum} or assessing the consistency of local density matrices~\cite{liu2006consistency}.
 (the quantum analog of deciding the consistency of marginal probability distributions).
Quantum computing not only holds the potential to provide a matching computational resource for tackling these challenges but also serves as a bridge to incorporate classical reasoning techniques for tackling nature's hardest problems.
Quantum circuits, in particular, offer a precise view into these problems, because the quantum circuit equivalence checking problem is also `quantum \NP'-hard.

Circuit equivalence~\cite{viamontes2007equivalence,wang2008xqdd,yamashita2010fast,amy2018towards,hong2022equivalence,hong2021approximate,berent2022towards,advanced2021burgholzer,thanos2023fast} also has many important applications.
Since quantum computers are highly affected by noise,
it is necessary to optimize the circuits to maximize the performance when running them on a real device.
Furthermore,
many devices can only handle shallow-depth circuits and are subject to various constraints such as connectivity, topology, and native gate sets. 
An essential aspect of designing and optimizing quantum circuits involves verifying whether two quantum circuits implement the same quantum operation.

Equivalence checking for so-called Clifford circuits is  tractable~\cite{thanos2023fast}, which is surprising considering their wide applicability in e.g. quantum error correction~\cite{calderbank1996quantum,steane1996error,shor1995scheme}.
Extending the Clifford gate set with any non-Clifford gate, however,
e.g. with a $T$ or Toffoli gate,
makes the problem immediately `quantum \NP'-hard, that is: \NQP-hard to compute exactly~\cite{tanaka2010exact} and \QMA-hard to approximate~\cite{janzing2005non}, even for constant-depth circuits~\cite{ji2009non}.\footnote{A similar ``jump'' in hardness was noted for quantum circuit simulation in~\cite{van2010classical}. %
}
The exact formulation of equivalence checking allows its discretization~\cite{kissinger_simulating_2022}, exposing the underlying combinatorial problem that classical reasoning methods excel in.
Indeed, exact reasoning methods based on decision diagrams are even used to compute the approximate version of the problem (see e.g. \cite{wei2022accurate,9586214}).

Our aim is to use reasoning tools based on satisfiability (SAT) for \emph{exact} equivalence checking of \emph{universal} quantum circuits.
Like SAT solvers~\cite{biere2009handbook,feng2023verification},
model counters, or \#SAT solvers, can handle complex constraints from industrial-scale applications~\cite{oztok2015atopdown,sang2004combiningcc},
despite the \#P-completeness of the underlying problem.

We propose a new equivalence-checking algorithm based on weighted model counting (WMC).
To do so, we generalize the WMC encoding of quantum circuit simulation from~\cite{qcmc}, showing that it essentially only relies on expressing quantum information in the so-called Pauli basis~\cite{gay2011stabilizer}, thus obviating the need for the arguably more complex stabilizer theory~\cite{gottesman1997stabilizer,efficient2021zhang}.
In addition, we extend the encoding with support for the (non-Clifford) Toffoli gate, allowing more efficient encodings for many circuits.
We then prove that a folklore theorem on quantum circuit equivalence checking~\cite{thanos2023fast} enables the reduction of the problem to a sequence of  weighted Boolean formulas that can be solved using existing weighted model counters (provided they support negative weights~\cite{qcmc}).

We show how the WMC encoding satisfies the conditions of the theorem from \cite{thanos2023fast} and implement the proposed equivalence checking algorithm in the open-source tool \ecmc, which uses the weighted model-counting tool GPMC~\cite{hashimoto2020gpmc}.\footnote{While the theorem presented in \cite{thanos2023fast} already supported universal circuits, the provided tool implementation in \cite{thanos2023fast} is limited to (non-universal) Clifford circuits.}
To assess the scalability and practicality of  \ecmc, we conduct experimental evaluations using random Clifford+$T$ circuits which closely resemble quantum chemistry applications~\cite{wright2022chemistry} and various quantum algorithms from the MQT benchmark~\cite{mqt2023quetschlich}, which includes important quantum algorithms such as QAOA, W-state, and VQE among others.
We compare the results of our method against that of the state-of-the-art circuit equivalence checker \qcec~\cite{advanced2021burgholzer}, showing that in several cases the WMC approach used by our \ecmc tool is competitive.

In summary, this paper provides a many-to-many reduction of (universal) quantum circuit equivalence to weighted model counting (WMC). As a consequence, 
we contribute additional new benchmarks for the WMC competition: basically, each pair of universal quantum circuits can be reduced to a sequence of weighted CNF encodings that need to be solved to (dis)prove equivalence.
This opens up numerous possibilities and challenges to better adapt model counters for this new application area in quantum computing.

\section{General Background}
\label{sec:background}
\defmath\true{1}
\defmath\false{0}
\newcommand\no[1]{\ensuremath{\overline{#1}}}

We provide the necessary background, on quantum computing (\autoref{sec:prelims-quantum}, for a complete introduction see~\cite{nielsen2000quantum}), matrix decomposition in the Pauli basis (\autoref{sec:pauli}) and weighted model counting (\autoref{sec:wmc}).
Readers familiar with any of these topics can skip the corresponding subsection.

\subsection{Quantum Computing}
\label{sec:prelims-quantum}

Quantum computing operates on quantum bits, also called qubits. While a classical bit can be either $0$ or $1$, a qubit has states $\ket{0}=\mat{1,0}^T$ and $\ket{1}=\mat{0,1}^T$. 
Any single-qubit state $\ket{\psi}$ is written as $\alpha\ket{0} + \beta\ket{1}$ where $\alpha$ and $\beta$ are complex numbers with norm $|\alpha|^2 + |\beta|^2 = 1$.
Here $\ket{\psi}$ denotes a pure $n$-qubit quantum state, i.e., an element of the complex vector space
$\complex^{2^n}$.
The states $\ket{0}$ and $\ket{1}$ form the standard computational basis for single-qubit quantum states when $n=1$. For an $n$-qubit quantum system, computational basis states can be obtained by the tensor product of $n$ single-qubit states, e.g., $\ket0 \otimes \ket1 = \ket{01} = \mat{0,1,0,0}^T$.
Only non-entangled states can be written as a tensor product of single-qubit states.
An entangled state is e.g. the well-known Bell state $\nicefrac{1}{\sqrt2}\mat{1, 0,0,1}^T =\nicefrac{1}{\sqrt2} (\ket{00} + \ket{11} )$.
Finally, $\bra{\psi}$ denotes the complex conjugate and transpose of the complex column vector  $\ket{\psi}$, which is a row vector $\bra{\psi} = \ket{\psi}^\dag$. In general, quantum states also have norm~1, which we can now write as $\braket{\psi|\psi} = \bra{\psi}\cdot \ket\psi = 1$.

Quantum states are computed by quantum circuits consisting of quantum gates, i.e. linear operators on the vector space $\mathbb{C}^{2^n}$.
Quantum gates on $n$-qubits can be expressed by $2^n\times 2^n$ \concept{unitary matrices},
i.e., matrices $U \in \complex^{2^n \times 2^n}$ that are invertible and satisfy $U^\dag = U^{-1}$ and hence are norm-preserving as desired.
The so-called Clifford gate set consists of the following two single-qubit and one two-qubit gates.

\vspace{-1em}
\[
  \begin{aligned}
  \text{Hadamard: }
    H = \frac{1}{\sqrt{2}}
    \begin{bmatrix*}[r]
      1 & 1 \\
      1 & -1
    \end{bmatrix*},
   ~
   \text{Phase: }
  S =
  \begin{bmatrix}
    1 & 0 \\
    0 & \dot{\imath}\
  \end{bmatrix},
   ~
  \text{Controlled-$Z$: }
  CZ = 
  \begin{smallmat}
    1 & 0 & 0 & 0 \\
    0 & 1 & 0 & 0 \\
    0 & 0 & 1 & 0 \\
    0 & 0 & 0 & \hspace{-1mm}-1 \\    
  \end{smallmat}
  \end{aligned}
\]
\vspace{-1em}

Though non-universal and classically simulatable~\cite{aaronson2008improved}, Clifford circuits, i.e, circuits composed of Clifford gates only, are expressive enough to describe entanglement, teleportation and superdense coding, and are used in quantum error-correcting codes~\cite{calderbank1996quantum,steane1996error,shor1995scheme} and in measurement-based quantum computation~\cite{raussendorf2001oneway}. Nonetheless, even equivalence checking of Clifford circuits is in $\P$~\cite{thanos2023fast}.

By extending the Clifford gate set with any non-Clifford gate, we immediately obtain a universal gate set, in the sense that arbitrary unitaries can be approximated~\cite{dawson2005solovaykitaev,kitaev1997quantum,kitaev2002classical}.
Below, we list the unitaries for the relevant non-Clifford gates, where $P$ is the phase shift gate and $R_X$ is a rotation gate ($R_Z$ is equivalent to $P$, and we omit $R_Y$).

\vspace{-1em}
\[
  \begin{aligned}
    T = 
    \begin{smallmat}
      1 & 0 \\
      0 & e^{i\frac\pi4}
    \end{smallmat},
  P(\theta) =
  \begin{smallmat}
    1 & 0 \\
      0 & e^{i\theta}
  \end{smallmat},
  R_X(\theta) =
  \begin{smallmat}
    \cos(\nicefrac\theta2) & \hspace{-1mm}-i \sin(\nicefrac\theta2) \\
     \hspace{-1mm}-i \sin(\nicefrac\theta2) & \cos(\nicefrac\theta2)
  \end{smallmat},
  \mathrm{Toffoli} = \hspace{-1mm}
  \begin{smallmat}
    1 & 0 & 0 & 0 & 0 & 0 & 0 & 0 \\
    0 & 1 & 0 & 0 & 0 & 0 & 0 & 0 \\
    0 & 0 & 1 & 0 & 0 & 0 & 0 & 0 \\
    0 & 0 & 0 & 1 & 0 & 0 & 0 & 0 \\    
    0 & 0 & 0 & 0 & 1 & 0 & 0 & 0 \\
    0 & 0 & 0 & 0 & 0 & 1 & 0 & 0 \\
    0 & 0 & 0 & 0 & 0 & 0 & 0 & 1 \\
    0 & 0 & 0 & 0 & 0 & 0 & 1 & 0 \\    
  \end{smallmat}
  \end{aligned}
\]
\vspace{-1em}

We fix $n$ as the number of qubits in the circuits(s) under consideration and write $[m]$ for the set $\set{1, \dots, m}$. Qubits are numbered as $[n]$.

Gates can be composed in parallel (by taking the tensor product of unitaries) and sequentially (by multiplying unitaries, \emph{from right to left}) to obtain circuits.
For instance, $H \otimes S$ represents a two-qubit circuit (unitary) where $H$ acts on the first qubit in parallel with $S$ on the second. We can compose this sequentially with $H \otimes I$, where $I$ is the identity
(no-op), 
by multiplying to the right $(H\otimes S) \cdot (H\otimes I)$. 
Since $H\cdot H = I$, this circuit is equivalent to $I\otimes S$.
For an $n$-qubit quantum system, applying a single-qubit gate $U$ on the $j$-th qubit is represented by $U_j = I^{\otimes j -1}\otimes U \otimes I^{\otimes n- j}$,
for instance: $H_j, X_j, S_j$ (the number of qubits $n$ is implicit here). Similarly, we denote by $CZ_{jk}$ the unitary operator taking the $j$-th qubit as the control qubit and $j$-qubit as the target to execute a controlled-$Z$ gate.
A circuit in our text is simply a list of $n$-qubit unitaries, i.e.,
$C = (G_0, G_1. \dots, G_{m-1})$ where $C$ can in turn be understood as unitary itself $U_C = U_{G_{m-1}} \cdot G_{m-2} \cdots U_{G_0}$.

\begin{example}
Consider circuit $C = (H^{\otimes 2},  CZ_{1,2} ,H_2 )$ below, which computes the Bell state from the all-zero initial state.
The $CZ$ gate is visualized as
{\protect\tikz[baseline=.1ex]{
\protect\fill (0,0) circle (1.5pt) coordinate (A);
\protect\fill (0,1.5ex) circle (1.5pt) coordinate (B);
\protect\draw [line width=.2mm] (A)--(B);}
\hspace{-.2cm}
}
. 
Applying the first parallel Hadamard gates, we obtain $\ket{\phi_1} = H^{\otimes 2}\cdot \ket{00} =\ket{++}$, where the $\ket +$ is the uniform \concept{superposition}, i.e., $\ket+=\nicefrac1{\sqrt2} (\ket 0 + \ket 1)$,
and so is $\ket{++} = \ket+\otimes \ket+ = \frac12(\ket{00}+\ket{01}+\ket{10} + \ket{11}) = \mat{ \frac12& \frac12& \frac12& \frac12 }^T$.
Applying  $CZ$ next yields
$\ket{\phi_2} = \frac12(\ket{00}+\ket{01} + \ket{10} - \ket{11})$.
Finally, by $H_2 = I\otimes H$, we obtain $\nicefrac1{\sqrt2}(\ket{00} + \ket{11})$.
\vspace{-1em}
\begin{center}
\scalebox{.8}{
    \begin{quantikz}[row sep={0.1cm}]
    \lstick{$\ket 0$} \slice{$\ket{\phi_0}$}  & \gate{H}\slice{$\ket{\phi_1}$} 
                & \ctrl{1}\slice{$\ket{\phi_2}$} &\qw \slice{$\ket{\phi_3}$} & \qw \\
     \lstick{$\ket 0$} &\gate{H} & \control{} & \gate{H} & \qw
    \end{quantikz}
}
\end{center}
\vspace{-2em}
\qed
\end{example}

\subsection{Decomposing Matrices in the Pauli Basis}
\label{sec:pauli}

The set of $2^n \times 2^n$ complex matrices %
satisfies the axioms of a vector space.
In particular, this means matrices can be decomposed in a \emph{basis}.

\begin{example}
\label{ex:matrix-basis}
The standard basis for the set of $2\times 2$ complex matrices is
\[
\dyad{0} = \begin{smallmat} 1&0\\0&0\end{smallmat}
,
\quad
|0\rangle\langle 1|
=
\begin{smallmat} 0&1\\0&0\end{smallmat}
,
\quad
|1\rangle\langle 0|
=
\begin{smallmat} 0&0\\1&0\end{smallmat}, 
\quad
\dyad{1}
=
\begin{smallmat} 0&0\\0&1\end{smallmat}
.
\]
The unique decomposition of the matrix $\begin{smallmat} 1&4+i\\4-i&-5\end{smallmat}$ in this basis is
\[
\begin{smallmat} 1&4+i\\4-i&-5\end{smallmat}
=
1 \cdot \begin{smallmat} 1&0\\0&0\end{smallmat} + (4+i) \cdot \begin{smallmat} 0&1\\0&0\end{smallmat}
+ (4-i) \cdot \begin{smallmat} 0&0\\1&0\end{smallmat} + (-5) \cdot \begin{smallmat} 0&0\\0&1\end{smallmat}.
\]

\vspace{-2.5em}
\qed
\end{example}
\vspace{-0.5\baselineskip}

For the choice of basis in \autoref{ex:matrix-basis}, one immediately observes that the decomposition coefficients are indeed $1, 4+i, 4-i$ and $-5$.
For a different basis choice, this is not so obvious.
In general, to determine the coefficients we use the Frobenius inner product for matrices, defined as $\langle M, N \rangle \defn \Tr(M^{\dagger} \cdot N)$ for $2^n \times 2^n$ matrices $M$ and $N$, where the trace $\Tr(M)$ is the sum of the diagonal entries of $M$.
As usual in linear algebra, we say that a basis $\mathfrak{B} = \{B_1, \dots, B_{2^n}\}$ for the $2^n \times 2^n$ matrix space is orthonormal if for all $j, k \in [2^n]$, $\langle B_j, B_k\rangle = 1$ if $j=k$ and $0$ otherwise.
The coefficients are now found as follows:
\spnewtheorem{fact}{Fact}{\bfseries}{\itshape}
\begin{fact}
    Given $2^n\times 2^n$ matrix $M$ and orthonormal basis $\mathfrak{B} = \{B_1, \dots, B_{2^n}\}$ of matrices, the coefficients $\gamma_j \in \mathcal{C}$ of the unique decomposition $M = \sum_{j=1}^{2^n} \gamma_j \cdot B_j$ are found as $\gamma_j = \langle B_j, M\rangle = \Tr(B_j^{\dagger} \cdot M)$.
\end{fact}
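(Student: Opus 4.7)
The plan is to exploit two ingredients together: (i) the fact that $\mathfrak{B}$ being a basis guarantees both the existence and uniqueness of a decomposition $M = \sum_{k} \gamma_k B_k$ (the existence follows from spanning, the uniqueness from linear independence), and (ii) the orthonormality condition $\langle B_j, B_k\rangle = \delta_{jk}$ to isolate a single coefficient by an inner-product projection.

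Concretely, I would first invoke the basis property to fix a (unique) representation $M = \sum_{k} \gamma_k B_k$ with $\gamma_k \in \complex$, so the statement is reduced to computing any particular $\gamma_j$. Then I would apply $\langle B_j, \cdot\rangle$ to both sides. The Frobenius inner product $\langle X, Y\rangle \defn \Tr(X^{\dagger} Y)$ is linear in its second argument (since $\Tr$ is linear and matrix multiplication distributes over sums), which gives
\[
\langle B_j, M\rangle \;=\; \Big\langle B_j, \sum_{k} \gamma_k B_k\Big\rangle \;=\; \sum_{k} \gamma_k\, \langle B_j, B_k\rangle.
\]
By orthonormality every term on the right vanishes except the one with $k = j$, which contributes $\gamma_j \cdot 1$. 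Unfolding the definition of $\langle B_j, M\rangle$ yields the claimed closed form $\gamma_j = \Tr(B_j^{\dagger}\cdot M)$.

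There is no real obstacle here; the argument is the standard ``compute coefficients by projecting onto dual basis vectors'' trick, transported from $\complex^{2^n}$ to the matrix space under the Frobenius inner product. The one place where care is needed is the placement of $B_j$ in the first slot of $\langle \cdot, \cdot\rangle$: the Frobenius product is conjugate-linear in its first argument and linear in the second, so putting $B_j$ in the second slot would produce $\overline{\gamma_j}$ rather than $\gamma_j$. Once this convention is respected, the proof reduces to a single line of algebra and the invocation of orthonormality.
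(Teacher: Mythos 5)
Your proof is correct and is exactly the standard projection argument that the paper leaves implicit by stating this as a Fact without proof: existence and uniqueness of the decomposition from the basis property, then isolating $\gamma_j$ by linearity of $\langle B_j,\cdot\rangle$ in the second slot together with orthonormality. Your side remark about placing $B_j$ in the first (conjugate-linear) argument to get $\gamma_j$ rather than $\overline{\gamma_j}$ is the right point of care and matches the paper's convention $\langle M,N\rangle = \Tr(M^{\dagger}N)$.
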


\vspace{-0.5\baselineskip}
\begin{example}
\label{ex:matrix-basis-2}
The standard basis from \autoref{ex:matrix-basis} is orthonormal, which one checks by computing the inner products of each pair of basis elements:
\[
\Tr(
\begin{smallmat} 1&0\\0&0\end{smallmat}^{\dagger}
\cdot
\begin{smallmat} 1&0\\0&0\end{smallmat}
)
=
\Tr(
\begin{smallmat} 1&0\\0&0\end{smallmat}
)
=1, \text{ while}
\]
\[
\Tr(
\begin{smallmat} 1&0\\0&0\end{smallmat}^{\dagger}
\cdot
\begin{smallmat} 0&1\\0&0\end{smallmat}
)
=
\Tr(
\begin{smallmat} 1&0\\0&0\end{smallmat}^{\dagger}
\cdot
\begin{smallmat} 0&0\\1&0\end{smallmat}
)
=
\Tr(
\begin{smallmat} 1&0\\0&0\end{smallmat}^{\dagger}
\cdot
\begin{smallmat} 0&0\\0&1\end{smallmat}
)
= 0
\quad\text{etc.}
\]
The coefficients of the matrix $M = \begin{smallmat} 1&4+i\\4-i&-5\end{smallmat}$ from \autoref{ex:matrix-basis} can be found as
\begin{align*}
\Tr(
\begin{smallmat} 1&0\\0&0\end{smallmat}^{\dagger}
\cdot
\begin{smallmat} 1&4+i\\4-i&-5\end{smallmat}
)
&=
\Tr(
\begin{smallmat} 1&4+i\\0&0\end{smallmat}
)
=1,\quad\quad
&\Tr(
\begin{smallmat} 0&1\\0&0\end{smallmat}^{\dagger}
\cdot
\begin{smallmat} 1&4+i\\4-i&-5\end{smallmat}
)
&=4+i,
\\
\Tr(
\begin{smallmat} 0&0\\1&0\end{smallmat}^{\dagger}
\cdot
\begin{smallmat} 1&4+i\\4-i&-5\end{smallmat}
)
&=4-i
, \quad\quad
&\Tr(
\begin{smallmat} 0&0\\0&1\end{smallmat}^{\dagger}
\cdot
\begin{smallmat} 1&4+i\\4-i&-5\end{smallmat}
)
&=-5
.
\end{align*}

\vspace{-2.5em}
\qed
\end{example}
\vspace{-0.5\baselineskip}

In this work, we will express matrices not in the standard basis but in the basis of Pauli strings.
The $2 \times 2$ \emph{Pauli matrices} $X,Y,Z$, together with identity, are
\[
  \begin{aligned}
    \sigma[00] \equiv \,I\, \equiv
    \begin{bmatrix}
      1 & 0 \\
      0 & 1
    \end{bmatrix},~
    \sigma[01] \equiv Z \equiv
    \begin{bmatrix*}[r]
      1 & 0 \\
      0 & -1
    \end{bmatrix*},~
    \sigma[10] \equiv X \equiv
    \begin{bmatrix*}[r]
      0 & 1 \\
      1 & 0
    \end{bmatrix*},~
    \sigma[11] \equiv Y \equiv
    \begin{bmatrix*}[r]
      0 & -\dot{\imath} \\
      \dot{\imath} & 0
    \end{bmatrix*}
  \end{aligned}
\] 

Each Pauli matrix $P$ is unitary ($P^{\dagger} = P^{-1}$)
and involutory ($P\cdot P = I$), and hence also \concept{Hermitian} ($P=P^{\dagger}$).
For $n$ qubits, we define the set of ``Pauli strings'' $\PG \defn \{P_1\otimes \ldots \otimes P_n \mid P_j \in \{ I, X, Y, Z \} \}$. 
Inheriting the properties of Pauli matrices, Pauli strings 
are unitary, involuntary and Hermitian.
It is well-known that the scaled Pauli strings $\{ \frac{1}{\sqrt{2}^n} \cdot P \mid P \in \PG\}$ are an orthonormal basis for the set of $2^n \times 2^n$ complex matrices~\cite{jones2024decomposing}.
Therefore, \autoref{fact:matrix-decomposition} holds. 
\emph{In general, the coefficients $\gamma_P$ are complex numbers, but for Hermitian matrices, they are real.}
\begin{fact}
\label{fact:matrix-decomposition}
We can decompose any $2^n \times 2^n$ complex matrix $M$ as $M = \sum_{P \in \PG} \gamma_P \cdot P$ where the \concept{Pauli coefficient} $\gamma_P = \frac{1}{2^n} \Tr(P^{\dagger} \cdot M)$.
\end{fact}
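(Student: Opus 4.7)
The plan is to derive Fact 2 as an immediate corollary of Fact 1 applied to the rescaled Pauli basis, so the main work reduces to verifying orthonormality of that basis and tracking the $\frac{1}{\sqrt{2}^n}$ normalization through the coefficient formula.

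First I would invoke the cited result that $\mathfrak{B}_{\mathrm{Pauli}} \defn \{\frac{1}{\sqrt{2}^n} P \mid P \in \PG\}$ is an orthonormal basis for the $2^n \times 2^n$ complex matrices under the Frobenius inner product. Although the excerpt allows me to cite this, I would want to see at least the core computation made explicit: for $P, Q \in \PG$, the tensor structure of the Pauli strings gives $\Tr(P^{\dagger} Q) = \prod_{j=1}^{n} \Tr(P_j^{\dagger} Q_j)$, and a direct check on the four single-qubit Paulis shows $\Tr(P_j^{\dagger} Q_j) = 2 \delta_{P_j, Q_j}$. Hence $\langle P, Q\rangle = 2^n \delta_{P,Q}$, so rescaling by $\frac{1}{\sqrt{2}^n}$ yields an orthonormal family of size $4^n = 2^n \cdot 2^n$, which matches the dimension of the matrix space and is therefore a basis.

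Next I would apply Fact 1 with $\mathfrak{B} = \mathfrak{B}_{\mathrm{Pauli}}$, obtaining a unique decomposition
\[
M = \sum_{P \in \PG} \tilde{\gamma}_P \cdot \tfrac{1}{\sqrt{2}^n} P, \qquad \tilde{\gamma}_P = \Tr\!\left(\bigl(\tfrac{1}{\sqrt{2}^n} P\bigr)^{\dagger} M\right) = \tfrac{1}{\sqrt{2}^n} \Tr(P^{\dagger} M).
\]
Setting $\gamma_P \defn \frac{1}{\sqrt{2}^n} \tilde{\gamma}_P = \frac{1}{2^n} \Tr(P^{\dagger} M)$ absorbs the two factors of $\frac{1}{\sqrt{2}^n}$ into a single $\frac{1}{2^n}$, yielding the claimed formula $M = \sum_{P \in \PG} \gamma_P \cdot P$. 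Uniqueness of $\gamma_P$ is inherited from the uniqueness guarantee in Fact 1.

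Finally, for the parenthetical remark that $\gamma_P \in \mathbb{R}$ when $M$ is Hermitian, I would use that each $P \in \PG$ is itself Hermitian (inherited from the single-qubit Paulis via the tensor product), combined with cyclicity and conjugation of the trace: $\overline{\gamma_P} = \frac{1}{2^n}\overline{\Tr(P M)} = \frac{1}{2^n}\Tr((PM)^{\dagger}) = \frac{1}{2^n}\Tr(M^{\dagger} P^{\dagger}) = \frac{1}{2^n}\Tr(M P) = \gamma_P$. I do not expect any real obstacle in this proof; the only subtlety is the bookkeeping of the normalization factor, which is why I would write the rescaled basis explicitly rather than trying to apply Fact 1 directly to unscaled Pauli strings.
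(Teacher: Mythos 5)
Your proposal is correct and follows essentially the same route as the paper, which likewise derives the statement by combining the orthonormality of the scaled Pauli strings $\{\frac{1}{\sqrt{2}^n} P \mid P \in \PG\}$ (cited as well known) with the general coefficient formula of the preceding Fact; you merely make the orthonormality check and the $\frac{1}{\sqrt{2}^n}$ bookkeeping explicit, and your verification $\Tr(P^{\dagger}Q) = 2^n \delta_{P,Q}$ and dimension count $4^n$ are both right. Your argument for the reality of the coefficients when $M$ is Hermitian also matches the paper's separate Lemma on that point.
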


\begin{lemma}
\label{lemma:hermitian-has-real-pauli-coefficients}.
If $M$ is a $2^n \times 2^n$ Hermitian matrix for some integer $n>0$, then the Pauli coefficient $\frac{1}{2^n} \Tr(P^{\dagger} \cdot M) \in \mathbb{R}$ for each length-$n$ Pauli string $P$.
\end{lemma}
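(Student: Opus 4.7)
The plan is to show that $\Tr(P^\dagger \cdot M)$ equals its own complex conjugate, which is the standard criterion for a complex number to be real (the prefactor $\frac{1}{2^n}$ is already real, so it can be ignored).

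The key ingredients I would combine are: (i) the identity $\overline{\Tr(A)} = \Tr(A^\dagger)$, which holds for any complex matrix $A$ because conjugating a sum of diagonal entries is the same as tracing the conjugate-transpose (since transposition preserves the diagonal); (ii) the ``reverse-order'' rule $(AB)^\dagger = B^\dagger A^\dagger$; (iii) the cyclic property $\Tr(AB) = \Tr(BA)$; and (iv) the hypotheses that $M = M^\dagger$ and, as noted in the paragraph preceding the lemma, every Pauli string $P \in \PG$ is Hermitian, i.e.\ $P = P^\dagger$.

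The calculation I would then write out is a short chain:
\[
\overline{\Tr(P^\dagger \cdot M)} \;=\; \Tr\!\bigl((P^\dagger \cdot M)^\dagger\bigr) \;=\; \Tr(M^\dagger \cdot P) \;=\; \Tr(M \cdot P) \;=\; \Tr(P \cdot M) \;=\; \Tr(P^\dagger \cdot M),
\]
where the second equality uses (ii), the third uses $M = M^\dagger$, the fourth is the cyclic property, and the final equality uses $P = P^\dagger$. This shows $\Tr(P^\dagger \cdot M)$ is invariant under complex conjugation, hence real, and dividing by $2^n \in \mathbb{R}$ preserves reality.

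There is no real obstacle here; the only ``choice'' is which pair of facts to apply first. I would open by explicitly justifying identity (i), since it is the one that a reader might not have seen in exactly this form, and then let the rest of the chain speak for itself. A one-line remark at the end could note that the argument actually shows more: $\Tr(A^\dagger B) \in \mathbb{R}$ whenever $A$ and $B$ are Hermitian and commute up to the cyclic trace rearrangement used above, but this generalization is not needed for the sequel.
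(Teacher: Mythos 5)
Your proof is correct and follows essentially the same route as the paper's: both show the trace is invariant under complex conjugation by combining $\overline{\Tr(A)} = \Tr(A^{\dagger})$, the reverse-order rule, the Hermiticity of $M$ and $P$, and cyclicity of the trace. The only cosmetic difference is that the paper conjugates $\Tr(M \cdot P)$ while you conjugate $\Tr(P^{\dagger} \cdot M)$, which are the same quantity.
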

\begin{proof}
We prove that $\Tr(M \cdot P)$ is invariant under complex conjugation.
\begin{align*}
\Tr(M \cdot P)^*\hspace{-1mm} 
&=
\Tr( \left(M \cdot P\right)^* ) 
\\
&= \Tr((\left( M \cdot P\right)^{*})^{\top}) 
&\textnormal{(trace is invariant under transposition)}
\\
&= \Tr(\left( M \cdot P\right)^{\dagger}) 
\\
&=
\Tr(P^{\dagger} \cdot M^{\dagger}) 
\\
&=
\Tr(P \cdot M) 
&\textnormal{($P, M$ are Hermitian)}
\\
&=
\Tr(M \cdot P).
&\textnormal{(by cyclicity of the trace)}
\end{align*}
\vspace{-3.4em}

\phantom X
\end{proof}

\vspace{-0.5\baselineskip}
\begin{example}
The matrix $M = \begin{smallmat} 1&4+i\\4-i&-5\end{smallmat}$ from \autoref{ex:matrix-basis} is Hermitian.
We calculate the coefficients, which are real numbers indeed:
\begin{eqnarray*}
&
\frac{1}{2^1}
\Tr(
I^{\dagger} \cdot M
)
=
\frac{1}{2}
\Tr(
M
)
=
\frac{1 - 5}{2}
=
-2
,
\quad
&
\frac{1}{2^1}
\Tr(
Z^{\dagger} \cdot M
)
=
\frac{1}{2}
\Tr(
\begin{smallmat} 1&4+i\\-4+i&5\end{smallmat}
)
=
3
,
\\
&
\frac{1}{2^1}
\Tr(
X^{\dagger} \cdot M
)
=
4
,
&
\frac{1}{2^1} 
\Tr(
Y^{\dagger} \cdot M
)
=
-1
\end{eqnarray*}
It is straightforward to verify that these are $M$'s Pauli coefficients:
\[
-2  I
+
4  X
-1  Y
+
3 Z
=
-2 \cdot \begin{smallmat}1&0\\0&1\end{smallmat}
+
4 \cdot \begin{smallmat}0&1\\1&0\end{smallmat}
-1 \cdot \begin{smallmat}0&-i\\i&0\end{smallmat}
+
3 \cdot \begin{smallmat}1&0\\0&-1\end{smallmat}
= \begin{smallmat} 1&4+i\\4-i&-5\end{smallmat}
.
\]

\vspace{-2.2em}
\qed
\end{example}

\subsection{Weighted Model Counting}
\label{sec:wmc}

In this work, we will encode the Pauli coefficients of specific matrices as weighted model counting: a sum of weights over all satisfying assignments of a boolean formula.
We here formally describe weighted model counting.

For boolean variables $x,y \in\bool = \set{0,1}$, we define a literal as e.g. $x$ and $\overline x$
and write conjunctions of literals (cubes) as products, e.g., $x\overline y = x \land \overline y$.
A clause is a disjunction of literals, e.g., $\overline x \lor y$.
A formula in conjunctive normal form (CNF) is a conjunction of clauses.
With a weight function, we obtain weighted CNF.

We assign weights to literals using weight function $W\colon \set{ \no x, x \mid x\in \vec x} \to \mathbb{R}$.
Given an assignment $\alpha\in\bool^{\vec x}$,
let $W(\alpha) = \prod_{x\in \vec x} W(x = \alpha(x))$.
Let  $F \colon \bool^{\vec x} \to \bool$ be a propositional formula over boolean variables $\vec x \in \bool^n$ and $W$ a weight function.
We define \concept{weighted model counting}~\cite{biere2009handbook,gomes2021model,chavira2008probabilistic,modelcounting2023} as follows.
\[
MC_W(F) \defn   \sum_{\alpha \in  \bool^V} F(\alpha)\cdot W(\alpha)
\]
\begin{example}
An an example, consider a formula \( F = b \land c \) over \( \vec x = \tuple{a,b,c} \). There exist two satisfying assignments: \( \alpha_1 = abc \) and \( \alpha_2 = \overline a bc \). 
Suppose a weight function \( W \) is defined as follows: \( W(a) = -2 \), \( W(\overline a) = 3 \), \( W(b) = 1/2 \), \( W(\overline b) = 2 \), while \( c \) remains unbiased, i.e., $W(c) = W(\overline c) = 1$. The weighted model counting for \( F \) with respect to \( W \) is computed as follows.
$ MC_W(F) = F(abc) \cdot W( a bc) + F(\overline a bc) \cdot W(\overline a bc)  =  (-2 \cdot \frac12 \cdot 1) + (3 \cdot \frac12 \cdot 1)  = \frac12. $
\qed
\end{example}

\section{Equivalence Checking Circuits in the Pauli Basis}
\label{sec:eqcheck}

In this section, we introduce (exact) equivalence checking~\cite{viamontes2007equivalence,wang2008xqdd,yamashita2010fast,amy2018towards,hong2022equivalence,berent2022towards,advanced2021burgholzer,thanos2023fast} in \autoref{def:eq}, the task we set out to solve.
In this work, we will only consider circuits which consist of gates, and do not contain measurements (this is without loss of generality since measurements be deferred to the end of the circuit~\cite{nielsen2000quantum}).

\begin{definition}\label{def:eq}
    Given two $n$-qubit circuits $U$ and $V$ where $n\in \mathbb{N}^+$,
    $U$ is equivalent to $V$, written $U\equiv V$, if there exists a complex number $c$ such that for all input states $\ket{\psi}$, we have $U\ket{\psi} = cV\ket{\psi}$.
\end{definition}

Here, the factor $c$ is called `global phase' and is irrelevant to any measurement outcome. Given that the circuits are unitaries, it has norm $|c|^2 = 1$.

At first sight, one might think that \autoref{def:eq} requires iterating over all quantum states.
However, although the $n$-qubit quantum state space is continuous, it is a complex vector space of dimension $2^n$, so it suffices to only consider $2^n$ basis vectors for proving $U$ and $V$ equivalent.
In fact, the novice approach to equivalence checking is to decompose $U$ and $V$ in the standard basis (see \autoref{ex:matrix-basis}); that is, to find $U$ and $V$ each by writing each of their individual gates in the standard basis and determining the full unitaries $U$ and $V$ by matrix multiplication (for sequential gates) and tensor product (for parallel gates), and finally checking whether the matrix entries of $U$ are equal to those of $V$, modulo a uniform constant $c$.

Having read \autoref{sec:pauli}, one might expect that we check $U\equiv V$ by checking equality modulo a global constant in the Pauli basis.
While this is certainly possible, this approach has no a priori advantage over the use of the standard basis.
Instead, we will use the following folklore result (for proof see e.g. \cite{thanos2023fast}).

\begin{theorem}\label{thm:main-theorem}
Let $U, V$ be two circuits on $n\geq 1$ qubits.
Then $U$ is equivalent to $V$ if and only if the following condition holds:\\
For all $j\in \{1, 2, \dots, n\}$ and $P\in \set{X, Z}$, we have $U P_j U^{\dagger} = V P_j V^{\dagger}$.
\end{theorem}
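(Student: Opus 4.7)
The proof naturally splits into the two directions of the biconditional.

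The forward direction is the easy half. Assume $U\equiv V$, so $U = cV$ for some $c\in\complex$ with $|c|^2=1$. Then for every qubit $j$ and every $P\in\{X,Z\}$,
\[
U P_j U^\dagger = (cV) P_j (cV)^\dagger = c c^{*}\, V P_j V^\dagger = V P_j V^\dagger,
\]
using that the global phase cancels against its conjugate.

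For the backward direction, the key trick is to set $W \defn V^\dagger U$ and translate the hypothesis into a commutation statement. Multiplying $U P_j U^\dagger = V P_j V^\dagger$ on the left by $V^\dagger$ and on the right by $V$ (using unitarity of $V$) yields $W P_j W^\dagger = P_j$, i.e., $W$ commutes with $X_j$ and with $Z_j$ for every $j\in[n]$. The goal then becomes: conclude $W = cI$ for some scalar $c$, so that $U = cV$.

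The main work is to promote this ``commutes with the single-qubit $X_j$ and $Z_j$'' property to ``commutes with every matrix.'' I would do this in two steps. First, since $Y = iXZ$ (up to a global scalar) and $I$ is trivially commuting, $W$ commutes with each single-qubit Pauli $P_j$ for $P\in\{I,X,Y,Z\}$. Because tensor products of commuting factors commute, $W$ then commutes with every Pauli string $P_1\otimes\cdots\otimes P_n \in \PG$. Second, by \autoref{fact:matrix-decomposition} the (scaled) Pauli strings form a basis of the $2^n\times 2^n$ matrix space, so any $2^n\times 2^n$ matrix $M$ can be written as a complex linear combination $M=\sum_{P\in\PG}\gamma_P P$; commutation with each basis element then lifts, by linearity, to $WM = MW$ for every such $M$.

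The final step is the classical fact that a matrix in $\complex^{2^n\times 2^n}$ which commutes with \emph{every} matrix must be a scalar multiple of the identity (Schur's lemma, or more elementarily: take $M = \dyad{i}{j}$ to force all off-diagonal entries of $W$ to vanish and all diagonal entries to coincide). Hence $W = cI$ for some $c\in\complex$, and unitarity of $W = V^\dagger U$ forces $|c|=1$. Therefore $U = cV$, i.e., $U\equiv V$. The only mildly subtle point is the bookkeeping of the phase factor $i$ when recovering $Y$ from $XZ$ and ensuring it does not spoil the commutation argument, but this is immediate since a nonzero scalar preserves commutation.
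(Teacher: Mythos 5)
Your proof is correct, but it takes a genuinely different route from the one the paper sketches. The paper reduces equivalence to agreement of the two conjugation maps on all density matrices (\autoref{def:eq2}), then argues that agreement on the $X_j$ and $Z_j$ propagates to all Pauli strings (since these generate $\PG$ up to a phase in $\{\pm1,\pm i\}$ and conjugation by a unitary is multiplicative) and hence, via the Pauli decomposition of Hermitian matrices, to all density matrices. You instead form $W=V^\dagger U$, observe that the hypothesis says exactly that $W$ commutes with each $X_j$ and $Z_j$, lift this to the whole matrix algebra via $Y=iXZ$, products, and the Pauli basis, and finish with Schur's lemma to get $W=cI$ with $|c|=1$. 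Both arguments rest on the same two structural facts about Pauli strings (generation by the $X_j,Z_j$ and spanning $\complex^{2^n\times 2^n}$), but your version is more self-contained: it never invokes the density-matrix characterization of equivalence --- whose ``if'' direction itself requires extracting a single global phase valid for all states simultaneously --- because Schur's lemma hands you the scalar $c$ directly, and $U=cV$ immediately matches \autoref{def:eq}. One small wording quibble: ``tensor products of commuting factors commute'' is not quite the statement you need; what you are actually using is that $P_1\otimes\cdots\otimes P_n$ equals the matrix product of the single-qubit-supported operators $I^{\otimes j-1}\otimes P_j\otimes I^{\otimes n-j}$, each of which commutes with $W$, so their product does too. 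With that phrasing tightened, the argument is airtight.
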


The main advantage of using \autoref{thm:main-theorem} instead of directly computing the (matrix entries of the) unitaries $U$ and $V$ is that updating the Pauli coefficients of $UP_j U^{\dagger}$ upon appending a gate to $U$, i.e. the update the coefficients of $UP_j U^{\dagger}$ to those of $(GU) P_j (GU)^{\dagger} = G \left(U P_j U^{\dagger}\right) G^{\dagger}$ for some gate $G$, is computationally easy for the Clifford gates $H, S, CNOT$, having lead to efficient Clifford circuit equivalence checking~\cite{thanos2023fast}.
In this work, we will include $T$ gates, Toffoli gates, and Pauli rotation gates, too, which enable equivalence checking of universal quantum computing (and which lifts the hardness of equivalence checking to quantum analogs of \NP, see \autoref{sec:introduction}).

Another advantage of \autoref{thm:main-theorem} is that, since $U$ is a unitary, $U P_j U^{\dagger}$ is Hermitian, so that its Pauli coefficients are real numbers by \autoref{lemma:hermitian-has-real-pauli-coefficients}, relieving us from the need to use complex numbers.

\begin{example}
\label{ex:eqpauli}
Choose $V = S_1$ and $U = T_1 T_1$.
In order to determine whether $U\equiv V$, we compute the Pauli coefficients of $UXU^{\dagger}, UZU^{\dagger}, VXV^{\dagger}$ and $VZV^{\dagger}$ as follows using \autoref{tab:clifford}.
\begin{table}[h]
    \centering
    \renewcommand{\arraystretch}{1.7}
    \vspace{-1.5em}
\scalebox{0.8}{
    \begin{tabular}{c|c|c|c|c}
       \multicolumn{2}{c|}{} & $\mathbf P^0$ & $\mathbf P^1 = T\mathbf P^0T^\dagger$ & $\mathbf P^2 = T\mathbf P^1T^\dagger$ \\
       \hline
    \multirowcell{2}{
        \scalebox{0.7}{
        \begin{minipage}{.2\textwidth}
    \begin{quantikz}[row sep={0.1cm}]
    \slice{$\mathbf P^0$} &
    \gate{T}\slice{$\mathbf P^1$} \gategroup[1,steps=2,style={dashed,rounded corners,fill=blue!20, inner ysep=0pt, inner xsep=0pt},
    			         background,label style={label position=below,anchor=north,yshift=-0.2cm}]{$C_U$}
        & \gate{T}\slice{$\mathbf P^2$} & 
    \end{quantikz}
    \end{minipage}~~~~
        }}   
        & $UXU^{\dagger}$ & $X$     & $\tfrac{1}{\sqrt{2}}(X+Y)$ & $\tfrac{1}{2}(X+Y+Y-X) = Y$ \\
        \cline{2-5}
        & $UZU^{\dagger}$ & $Z$     & $Z$ & $Z$ \\
        \hline
    \end{tabular}
}
    \qquad
    \scalebox{0.8}{\begin{tabular}{c| c|c|c}
       \multicolumn{2}{c|}{} & $\mathbf P^0$ & $\mathbf P^1 = S\mathbf P^0S^\dagger$ \\
       \hline
    \multirowcell{2}{
    \scalebox{0.7}{
    \begin{minipage}{.2\textwidth}
    \begin{quantikz}[row sep={0.1cm}]
    \slice{$\mathbf P^0$} &
    \gate{S}\slice{$\mathbf P^1$} \gategroup[1,steps=1,style={dashed,rounded corners,fill=blue!20, inner ysep=0pt, inner xsep=0pt},
    			         background,label style={label position=below,anchor=north,yshift=-0.2cm}]{{$C_V$}} & 
    \end{quantikz}
    \end{minipage}\hspace{-.7cm}
    }}   
        & $VXV^{\dagger}$ & $X$     & $Y$ \\
        \cline{2-4}
        & $VZV^{\dagger}$ & $Z$     & $Z$  \\
        \hline
    \end{tabular}
}
    \vspace{-2em}
\end{table}
By \autoref{thm:main-theorem}, this implies that $U$ and $V$ are equivalent, which we verify by writing their unitaries in the standard basis as follows.
\[
U = S = \begin{smallmat} 1&0\\0 & i\end{smallmat}, \quad V = T\cdot T = \begin{smallmat} 1&0\\0&\sqrt{i}\end{smallmat} \cdot \begin{smallmat} 1&0\\0&\sqrt{i}\end{smallmat} = \begin{smallmat} 1&0\\0&i\end{smallmat}
\]
Finally, we remark that $UXU^{\dagger} = \tfrac{1}{2}(\cancel{X}+Y+Y-\cancel{X}) = Y$ represents both constructive ($Y$ terms add up) as well as destructive interference ($X$ terms cancel).
\qed
\end{example}

We will finish this section by explaining the intuition behind \autoref{thm:main-theorem}, by rephrasing its proof from \cite{thanos2023fast}.
The first step in the proof is to realize that \autoref{def:eq} is equivalent to the following.

\begin{lemma}\label{def:eq2}
    Given two $n$-qubit circuits $U$ and $V$ where $n\in \mathbb{N}^+$,
    $U$ is equivalent to $V$ iff for all $n$-qubit quantum states $\ket{\phi}$, we have $U\dyad{\phi} U^{\dagger} = V \dyad{\phi} V^{\dagger}$.
\end{lemma}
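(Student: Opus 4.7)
The plan is to prove both directions of the biconditional, with the forward direction being essentially immediate and the backward direction requiring a short argument that every quantum state is an eigenvector of $V^\dagger U$.

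For the forward direction, assume $U \equiv V$, so there exists $c \in \mathbb{C}$ with $|c|=1$ such that $U\ket\phi = cV\ket\phi$ for every $\ket\phi$. Taking the adjoint gives $\bra\phi U^\dagger = c^* \bra\phi V^\dagger$, and multiplying yields $U\dyad\phi U^\dagger = c c^* V\dyad\phi V^\dagger = V\dyad\phi V^\dagger$, which is exactly what we need.

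For the backward direction, I would introduce the auxiliary unitary $W \defn V^\dagger U$ and rewrite the assumption $U\dyad\phi U^\dagger = V\dyad\phi V^\dagger$ as $W\dyad\phi W^\dagger = \dyad\phi$ for every $\ket\phi$. This equation says $W\ket\phi$ is a vector whose outer product with itself equals $\dyad\phi$, which forces $W\ket\phi = c_\phi \ket\phi$ for some unit-modulus scalar $c_\phi$ depending on $\phi$. In other words, every state is an eigenvector of $W$.

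The main (though standard) step is then to show that an operator for which every vector is an eigenvector must be a scalar multiple of the identity, and in particular that the scalar $c_\phi$ does not depend on $\phi$. I would do this by picking two computational basis states $\ket j$ and $\ket k$ and considering their superposition $\ket\phi = \tfrac{1}{\sqrt 2}(\ket j + \ket k)$; linearity of $W$ together with the eigenvector property on $\ket j, \ket k$, and $\ket\phi$ forces $c_j = c_k$. Ranging over all pairs $j,k$ yields a single common eigenvalue $c$, so $W = cI$, i.e., $U = cV$, which gives $U\ket\phi = cV\ket\phi$ for all $\ket\phi$ and establishes $U \equiv V$. The only obstacle worth naming is being careful that $c_\phi$ is well-defined (the outer product $\dyad\phi$ determines $\ket\phi$ only up to a global phase), but this ambiguity is harmless since it only affects the overall scalar $c_\phi$ which we are free to choose.
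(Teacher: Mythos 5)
Your proof is correct. Note that the paper itself does not prove this lemma --- it states it as the first step of the folklore argument and defers to the cited reference --- so there is no in-paper proof to compare against. Your argument is the standard one and is complete: the forward direction is the immediate $|c|^2=1$ computation, and for the converse you correctly reduce to showing that $W = V^{\dagger}U$ satisfies $W\dyad{\phi}W^{\dagger} = \dyad{\phi}$ for all unit vectors, hence has every unit vector as an eigenvector, and the superposition trick on pairs of basis states forces all eigenvalues to coincide, giving $W = cI$ and thus $U = cV$. Your remark that $\dyad{\phi}$ determines $\ket{\phi}$ only up to phase is the right caveat, and you handle it correctly since the ambiguity is absorbed into $c_\phi$.
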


The object $\dyad{\phi}$ in \autoref{def:eq2} is the \concept{density matrix} representation of~state~$\ket{\phi}$~\cite{nielsen2000quantum}.
For any unitary $U$, with $\ket\psi = U\ket\phi$, we can easily see that the corresponding operation on the density matrix $\op{\psi}{\psi}$ is an operation called \concept{conjugation}, i.e.,
$\op{\psi}{\psi} = U \op{\phi}{\phi} U^\dagger$.
Density matrices are $2^n \times 2^n$ Hermitian matrices and can thus be expressed as a real-weighted sum of Pauli strings by \autoref{lemma:hermitian-has-real-pauli-coefficients}.
For this reason, we observe that if $UPU^{\dagger} = VPV^{\dagger}$ for each Pauli string $P$, i.e. $U$ and $V$ coincide on all Pauli strings by conjugation, then $U$ and $V$ must also coincide on all density matrices by conjugation, and thus they are equivalent by \autoref{def:eq2}.

The final step in proving \autoref{thm:main-theorem} is to realize that for a unitary matrix, the conjugation action is completely determined by fixing its conjugation action on only all $X_j$ and $Z_j$ for $j\in [n]$.
This insight relies on two parts: First, each Pauli string can be written as the product of $X_j$ and $Z_j$ modulo a factor $\in \{\pm 1, \pm i\}$.
Second, for a unitary $M$, we have $M^{\dagger} M = \unit$, which implies that instead of first multiplying $X_j$s and $Z_j$s to construct a Pauli string, followed by conjugation, one can first conjugate and subsequently multiply to arrive at the same result.\footnote{The conjugation map $P \mapsto UP U^{\dagger}$ is a group isomorphism.}
For example, $M X_j M^{\dagger} \cdot M Z_j M^{\dagger} = M X_j \unit Z_j M^{\dagger} = M X_j Z_j M^{\dagger}$.

We finish this section by observing that in \autoref{tab:clifford}, the last two non-Clifford gates yield a linear combination of Pauli strings~\cite{qcmc} for each Pauli string (matrix).
This potentially causes an explosion of the number of Pauli strings when conjugating multiple non-Clifford gates.
To handle this, we will exploit the strength of model counters in \autoref{sec:encoding} by representing Pauli strings $\hP$ as satisfying assignments which are weighted by the coefficient $\gamma_\hP$, as explained next in \autoref{sec:wmc}.

\section{Encoding Quantum Circuit Equivalence in SAT}
\label{sec3}
\label{sec:encoding}

\begin{table}[t!]
  \centering
  \caption{Lookup table for conjugating Pauli gates by Clifford+$T$+$R_X$ gates. The subscripts ``c'' and ``t'' stand for ``control" and ``target". Adapted from~\cite{mei2024simulating}.}
  \label{tab:clifford}
  \setlength{\tabcolsep}{3pt} 
  \begin{tabularx}{0.92\textwidth}{c|rr||c|cc||c|cc}
      \toprule
      \textbf{Gate} & \textbf{In} & \textbf{Out} & \textbf{Gate} & \textbf{In} & \textbf{Out} & \textbf{Gate} & \textbf{In} & \textbf{Out} \\
      \midrule
      & $X$ & $Z$ & \multirow{6}{*}{CZ} & $\phantom{-}I_c \otimes X_t$ & $\phantom{-}Z_c\otimes X_t$ && $X$& $\frac1{\sqrt2}(X+Y)$ \\
      $H$ & $Y$ & $-Y$ & & $\phantom{-}X_c \otimes I_t$ & $\phantom{-}X_c \otimes Z_t$ & $T$ & $Y$ & $\frac1{\sqrt2}(Y - X)$ \\
      & $Z$ & $X$ & & $\phantom{-}I_c \otimes Y_t$ & $\phantom{-}Z_c \otimes Y_t$   &&$Z$ & $Z$ \\
      \cline{1-3} \cline{7-9}
       & $X$ & $Y$ & & $\phantom{-}Y_c \otimes I_t$ & $\phantom{-}Y_c \otimes Z_t$  &&$X$& $X$  \\
      $S$ & $Y$ & $-X$ & & $\phantom{-}I_c \otimes Z_t$ & $\phantom{-}I_c \otimes Z_t$  &$R_X(\theta)$& $Y$& $\cos(\theta)Y + \sin(\theta) Z$ \\
      & $Z$ & $Z$ & & $\phantom{-}Z_c \otimes I_t$ & $\phantom{-}Z_c \otimes I_t$ && $Z$& $\cos(\theta)Z - \sin(\theta) Y$\\
      \bottomrule
  \end{tabularx}
  \vspace*{-1em} 
\end{table}

The previous section \autoref{sec:eqcheck}, centered around \autoref{thm:main-theorem}, explained that equivalence checking can be done by conjugating Pauli strings with unitaries, and that the required calculations for this approach are the same as in simulation of quantum circuits using a density matrix representation of the quantum state.
In this section, we show how we reduce equivalence checking of universal quantum circuits to weighted model counting, which is formalized in \autoref{prop:wmc} below.
Our approach is based on the $\mathcal O(n +m)$-length encoding for quantum circuit simulation provided in \cite{qcmc}.
Finally, our encoding in this work extends \cite{qcmc} with Toffoli gates.

For the rest of the paper,
we use $P$ for an unweighted Pauli string and
we use $\mathbf P$ for a summation of weighted Pauli strings,
e.g. $\tfrac{1}{\sqrt{2}}X + \tfrac{1}{\sqrt{2}}Y$.
We will also use the words `circuit' and `unitary' interchangeably, because it is clear from the context which two is meant (recall that we only consider circuits without measurements).

For notational simplicity, we will solve a rephrased version of the equivalence checking problem from \autoref{def:eq} in \autoref{sec:eqcheck}: to check whether a unitary $A$ is equivalent to the identity unitary $\unit$, which leaves every input unchanged.
By choosing $A \defn V^{\dagger}U$, we see that $U \equiv V$ precisely if $A \equiv \unit$.
If $U$ and $V$ consist of gates $U = (U_0, U_1, \dots, U_{m-1})$ and $V = (V_0, V_1, \dots, V_{\ell-1})$ for $m, \ell \in \mathbb{N}_{\geq 1}$, then a circuit for $A$ is given as the $m + \ell$ gates $A = (U_0, U_1, \dots, U_{m-1}, V_{\ell-1}^{\dagger}, V_{\ell - 2}^{\dagger}, \dots, V_{0}^{\dagger})$.

Following \autoref{thm:main-theorem}, our approach will be as follows: given a unitary $A$ and its circuit $C_A =\tuple{G_0\cdots G_{m-1}} \in \set{H_j,S_j, CZ_{jk}, T_j, \textrm{Toffoli}_{jkl}, R_X(\theta)_j, \dots \mid j,k,l \in [n]}^m$, we need to compute $\mathbf P^m \defn A \mathbf{P}^0 A^{\dagger}$ from an initial $\mathbf P^0\in\{+X_i, +Z_i \mid i\in[n]\}$, showing that $\mathbf P^m = \mathbf P^0$.
Since $\textbf P^0$ is a Pauli string and thus Hermitian, so~is~$\textbf P^m$.
Based on \autoref{lemma:hermitian-has-real-pauli-coefficients}, we can track each $\mathbf P^k$ as a real-weighted sum of Pauli strings.

\subsection{Encoding Pauli coefficients as Weighted Model Counts}
We first explain the encoding for circuit simulation from \cite{qcmc}, where we encode the real-weighted sum of Pauli operators $\mathbf P$ and the update rules of the circuit $C$ as weighted boolean formulas.
We start with the simplest case --- a Pauli string, 
then consider how to encode a single summand, i.e., a single weighted Pauli operator, 
in the end extend this to a weighted sum of Pauli operators.

Given a Pauli string $P=\bigotimes_{i\in[n]}\sigma[a_i, b_i]$ with $a_i, b_i \in \{0,1\}$,
the corresponding encoding is denoted as $F_P$, which is the boolean formula which only has $\{x_1\gets a_1,\cdots, x_n\gets a_n, z_1\gets b_1,\cdots, z_n\gets b_n\}$ as satisfying assignment, for example $F_{Z \otimes X} = F_{\sigma[01]\otimes \sigma[10]} = \no x_1 x_2 z_1 \no z_2$.
When it comes to weighted Pauli string,
although the weights are never imaginary,
they can still have a $\pm$ sign.
A weighted Pauli operator can be therefore encoded by $2n+1$ boolean variables:
two bits $x_i,z_i$ for each of the $n$ Pauli matrices and one sign bit $r$,
such that $\mathbf P =(-1)^r\sigma[x_{1},z_{1}] \otimes \ldots \otimes \sigma[x_{n},z_{n}]$.
For example, consider boolean formula $F_{\mathbf P} = r\no x_1 z_1x_2z_2$ where $\mathbf{P} = -Z\otimes Y$. Its one satisfying assignment is $\{r\gets 1, x_1 \gets 0, z_1\gets 1, x_2 \gets 1, z_2\gets 1\} \equiv -Z\otimes Y$.
We later introduce weights $W(r) = -1$ and $W(\no r) = 1$ to interpret the sign.
So for a formula $F(x_1,z_1, \dots,x_n,z_n,r)$, we let the satisfying assignment represent a set (sum) of Pauli strings.
The base case is the formula $F_{\mathbf P^0} = F_{P}$ for a Pauli string $P \in \{X_j, Z_j \mid j \in [n]\}$.
Next, we need to encode how sums of Pauli operators evolve
when conjugating with the gates of the circuit, one by one.
For this, our encoding duplicates the variables for all $m$ gates (each time step) as follows (which is similar to encodings for bounded model checking~\cite{biere2009bounded}).
\[
  \vec w^t = \{x_j^t, z_j^t, r^t \mid j\in[n]\}\text{  for $t\in\set{0, \dots, m}$}
  \text{~~~~and~~~~}  \vec v^t = \bigcup_{i\in[t]} \vec w^i.
\]
For example, $\mathbf P^0(\vec w^0) = X_1$ is encoded as $\no r^0 x_1^0 \no z^0_1 \dots x_n^0 \no z^0_n $.
Also, the satisfying assignments of a constraint $C(\vec v^m)$ projected to variables $\vec w^t$ represent the sum of Pauli operators after conjugating the initial $t$ gates $G_1, \dots, G_t$ of the circuit~$A$, written:
\[ 
C(\vec v^m)[\vec w^t] = \sum_{\alpha\in \set{0,1}^{\vec v^m}} C(\alpha) \cdot(-1)^{\alpha(r^t)}\,\cdot \bigotimes_{j\in[n]}\sigma[\alpha(x^t_{j}),\alpha(z^t_{j})]
\]
The next question is how to encode gate semantics, i.e., define a constraint to get $\mathbf P^1$ by conjugating gate $G_0$ to $\mathbf P^0$, etc.
Note that since $\mathbf P^0 \in \set{X_j, Z_j \mid j \in [n]}$ consists of a sum of only one Pauli operator, for Clifford circuits $C$, there will only be a single satisfying assignment~$\alpha$ for all time steps $t\in[m]$, since e.g. $H X H^\dag = Z$ (and not e.g. $Z + Y$).
Non-Clifford gates, like $T$ or Toffoli, will add satisfying assignments representing summands with different weights (e.g. powers of  powers of $\nicefrac 1{\sqrt2}$ for the $T$ gate as discussed above).
To encode these weights, 
we introduce new variables $u^t$, but only at time steps $t$ with a $T$ gate (i.e., $G_t= T$).

When a gate $T_j$ is performed and there is a satisfying assignment with $x^t_{j}=1$, it means that we are conjugating a $T$ gate on the $j$-th qubit set to $\pm X$ or $\pm Y$ and the result should be
either $TXT^\dag = \frac{1}{\sqrt{2}}(X+Y)$ or $TYT^\dag=\frac{1}{\sqrt{2}}(Y-X)$ (modulo sign).
To achieve this the encoding should let $z^{t+1}$ unconstrained and setting
$u^t\Leftrightarrow x^t_j$.
Accordingly, we set the weights $W(u^t) = \frac1{\sqrt 2}$ and
$W(\no u^t) = 1$.
\autoref{tab:booltgate} illustrates  how the boolean variables $\vec w^t$ and $\vec w^{t+1}$ relate for a $T$ gate (derived by computing $T_j P T_j^\dagger$ for Pauli gate $P$).

The encoding of gate semantics can be derived similarly.
For example the boolean constraint for $H^t_j$ is given by
\[
  F_{H^t_j}(\vec w^{t}, \vec w^{t+1}) \defn  r^{t+1} \Leftrightarrow r^{t} \oplus x^t_{j}z^t_{j}
  \land~  z^{t+1}_{j} \Leftrightarrow  x^t_{j}
  \land~  x^{t+1}_{j} \Leftrightarrow  z^t_{j}
\]
Here we omit additional constraints $a^{t+1} \Leftrightarrow a^t$
for all unconstrained time-step-$t+1$ variables $a$, i.e., for $a = x^{t+1}_l,z_l^{t+1}$ with $l \neq j$.
Similarly, by abbreviating $F_{G_t}(\vec w^t, \vec w^{t+1})$ as $G^t$, the encoding for other Clifford+$T$ gates are as follows:
\begin{align*}%
\
  S^t_j &\defn r^{t+1} \Leftrightarrow r^{t} \oplus x^t_{j}z^t_{j}
  &&~\land~  z^{t+1}_{j} \Leftrightarrow  x^t_{j} \oplus  z^t_{j}   \\
\
  T^t_j &\defn  x^{t+1}_{j} \Leftrightarrow x^t_{j} 
  			\neg x^{t}_{j}  (z^{t+1}_{j} \Leftrightarrow z^t_{j})
  			\hspace{-.4cm}
  &&~\land~ r^{t+1} \Leftrightarrow r^t \oplus  x^t_{j}  z^t_{j}  \neg z^{t+1}_{j} 
  			\hspace{-.5cm}
  &&~\land~ u^t \Leftrightarrow x^t_{j}.\\
  CZ^t_{jk} &\defn r^{t+1} \Leftrightarrow r^{t} \oplus x^t_{j}x^t_{k}
    (z^t_{k} \oplus z^t_{j})
  \ \hspace{-0.6cm}
  &&~\land~  z^{t+1}_{k} \Leftrightarrow  z^t_{j} \oplus  x^t_{k}
  &&~\land~  z^{t+1}_{j} \Leftrightarrow  z^t_{k} \oplus  x^t_{j}
\end{align*}
\vspace{-1.4em}

\begin{table}[t!]
\caption{Boolean variables under the action of conjugating one T gate. Here we omit the sign $(-1)^{r^{t}}$ for all $P$ and sign $(-1)^{r^{t+1}}$ for all ${TPT^\dag}$.}
\label{tab:booltgate}
\setlength{\tabcolsep}{6pt} %
\renewcommand{\arraystretch}{1} %
  \centering
  \begin{tabular}{rc|c|c|c|c|c}
    \hline
     $P$  & $x^t z^t r^t$ & $TPT^\dag$  & $x^{t+1}$ & $z^{t+1}$ & $r^{t+1}$ & $u^t$  \\
     \hline
     $I$    & 00 $r^t$ & $I$ & \multirow[]{2}{*}{ 0} & \multirow[]{2}{*}{$z^t$} & \multirow[]{2}{*}{$r^t$} &\multirow[]{2}{*}{0} \\
\cline{1-3}
    $Z$     & 01 $r^t$ &$Z$  &  & &  \\
    \hline
    $X$    & 10 $r^t$ &$\frac{1}{\sqrt{2}}(X+Y)$ & \multirow[]{2}{*}{1} & \multirow[]{2}{*}{\{0,1\}} & $r^t$ & \multirow[]{2}{*}{$1$} \\
    \cline{1-3}
    \cline{6-6}
    $Y$     & 11 $r^t$ &$\frac{1}{\sqrt{2}}(Y-X)$ & & & $r^t\oplus \neg z^{t+1}$ & \\
    \hline
  \end{tabular}
\end{table}

To this end,
we can inductively define boolean constraints for each time step as
$F_{\mathbf P^{t}}(\vec v^{t}) = F_{\mathbf P^0}(\vec v^0)\wedge \bigwedge_{i\in [t-1]} G^i (\vec v^{i}, \vec v^{i+1})$ for $t\geq 1$,
where $G_i$ denotes the gate at time step $i$ and $F_{\mathbf P^0}(\vec v^0)$ encodes $\mathbf P^0$.

\begin{example}\label{ex:sat-update}
Reconsider the circuit $U = T \cdot T$ from \autoref{ex:eqpauli}.
Starting with $\mathbf P^0 = X$, the formulas are $F_{\mathbf P^0} =  x^0_1 \no{z^0_1} \no{r^0}$, $F_{\mathbf P^1} = F_{\mathbf P^0} \wedge F_{T^0_1}$, i.e.
\begin{align*}
    F_{\mathbf P^1} \hspace{-.5mm}= F_{\mathbf P^0}\hspace{-.5mm} \wedge \hspace{-.5mm} F_{T^0_1} \hspace{-.5mm}= x^0_1 \no{z^0_1} \no{r^0}\wedge   x^{1}_{1} \hspace{-.5mm}\Leftrightarrow x^0_{1} 
  			\neg x^{0}_{1}  (z^{1}_{1} \Leftrightarrow z^0_{1})
  \wedge r^{1} \hspace{-.5mm}\Leftrightarrow r^0 \oplus  x^0_{1}  z^0_{1}  \neg z^{1}_{1} 
  \wedge u^0 \hspace{-.5mm}\Leftrightarrow x^0_{1},
\end{align*}
and similarly $F_{\mathbf P^2} = F_{\mathbf P^1} \wedge F_{T^1_1}$.
\qed
\end{example}

A formalization of the explanation above as induction over the gates proves the following proposition, which relates the weighted model counts to the Pauli coefficients (see \autoref{fact:matrix-decomposition}).

\begin{proposition}
[WMC computes the Pauli coefficients]
\label{prop:encoding}
Let $C_A$ be an $n$-qubit circuit,
$A = G_0, G_1,\cdots G^{m-1}$ the corresponding unitary and
$\mathbf P^0$ a Pauli string,
so that
the encoding of $\mathbf P^m \defn A \mathbf P^0 A^\dag$ is given by $F_{\mathbf{P}^m} \defn F_{\mathbf{P}^0} \wedge \bigwedge_{i\in[m]}F_{G^i}$ with according weight function $W$.
For any $\mathbf{P}^0 \in \{+X_j, +Z_j \mid j \in [n]\}$,
the weighted model count of $F_{\mathbf P^m}\wedge F_{P}$ equals the Pauli coefficient $\gamma_{P}$ of $A\mathbf{P}^0 A^{\dag}$ for all $P \in \PG$.
That is, $MC_W(F_{\mathbf P^m}\wedge F_{P}) = \frac{1}{2^n} \cdot \Tr(P^{\dagger} \cdot A\mathbf P^0 A^{\dagger})$ for all $P \in \PG$.
\end{proposition}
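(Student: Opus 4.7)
The plan is to prove Proposition~\ref{prop:encoding} by induction on the number of gates $m$, strengthening the inductive hypothesis to describe the entire matrix $\mathbf{P}^t$ at each intermediate time step rather than only its individual Pauli coefficients. Concretely, I would establish the invariant
\[
\mathbf{P}^t \;=\; \sum_{\alpha \models F_{\mathbf{P}^t}} W(\alpha)\,(-1)^{\alpha(r^t)} \bigotimes_{j\in[n]} \sigma[\alpha(x_j^t),\alpha(z_j^t)],
\]
where the sum ranges over satisfying assignments of the variables $\vec v^t$ active up to time $t$, and the only literals contributing non-trivially to $W(\alpha)$ are the auxiliary $u^i$ with $i<t$. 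Once this holds at $t=m$, the proposition drops out by taking the Frobenius inner product with an arbitrary Pauli string $P$: by orthonormality of the scaled Pauli basis (Fact~\ref{fact:matrix-decomposition}), only assignments whose $(x^m,z^m)$-segment encodes $P$ survive, which is exactly what conjoining $F_P$ enforces; attaching $W(r^m)=-1$, $W(\no{r^m})=1$ then collapses the two choices of $r^m$ into the signed real coefficient $\gamma_P = \frac{1}{2^n}\Tr(P^\dagger \mathbf{P}^m)$.

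The base case $t=0$ is immediate: $F_{\mathbf{P}^0}$ pins $\vec w^0$ to the unique encoding of $\mathbf{P}^0 \in \{+X_j,+Z_j\}$, giving a one-term sum with weight $1$ and sign $+$ that equals $\mathbf{P}^0$.

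For the inductive step from $t$ to $t+1$, I would do a case analysis on the gate $G_t$, showing that $F_{G^t}(\vec w^t,\vec w^{t+1})$ implements the map $P \mapsto G_t P G_t^\dagger$ on each Pauli summand of $\mathbf{P}^t$ separately; conjugation is linear, so summing these images yields $\mathbf{P}^{t+1}$. For Clifford gates $H_j, S_j, CZ_{jk}$, the encoding is a \emph{function} from assignments of $\vec w^t$ to assignments of $\vec w^{t+1}$, read off directly from Table~\ref{tab:clifford}, and the clauses $r^{t+1}\Leftrightarrow r^t\oplus\cdots$ supply the correct sign update. For the $T$ gate (Table~\ref{tab:booltgate}), when $x^t_j=0$ the encoding is again a function (identity on $I, Z$), whereas when $x^t_j=1$ the variable $z^{t+1}_j$ is unconstrained, yielding two extensions corresponding exactly to the two summands of $TXT^\dagger=\tfrac1{\sqrt2}(X+Y)$ or $TYT^\dagger=\tfrac1{\sqrt2}(Y-X)$; the fresh weight $W(u^t)=1/\sqrt2$ introduced by $u^t\Leftrightarrow x^t_j$ reproduces the prefactor, while the clause $r^{t+1}\Leftrightarrow r^t\oplus x^t_j z^t_j\neg z^{t+1}_j$ produces the minus sign on the $X$-summand of $TYT^\dagger$. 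Because $W$ factors over the disjoint variable blocks of different time steps, this extends the invariant from $t$ to $t+1$.

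The main obstacle is the new Toffoli case. Conjugation by Toffoli sends a single Pauli string to a real-weighted sum of (up to several) Pauli strings whose coefficients are rational powers of $\tfrac12$ with $\pm$ signs, so I would need to (i) enumerate $\mathrm{Toffoli}\cdot P\cdot \mathrm{Toffoli}^\dagger$ for each $P$ on the three involved qubits, and (ii) design and verify an encoding $F_{\mathrm{Toffoli}^t_{jkl}}$ whose satisfying extensions—together with fresh weighted auxiliary variables playing the role of $u^t$—reproduce that sum term by term. This is essentially a bigger and less symmetric version of the $T$-gate calculation; once it is in place, the inductive template closes uniformly across all gate types.
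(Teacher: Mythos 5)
Your proposal is correct and matches the paper's own (only sketched) argument: an induction over the gates maintaining the invariant that the weighted satisfying assignments of $F_{\mathbf P^t}$, projected to $\vec w^t$, represent the Pauli decomposition of $\mathbf P^t$, followed by conjoining $F_P$ and using orthonormality of the scaled Pauli basis to read off $\gamma_P$. Your handling of the $T$-gate branching, the sign variable $r^m$, and the Toffoli case via its lookup table is exactly the level of detail the paper itself relies on.
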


We emphasize the necessity for using negative weights.
For example, in \autoref{ex:eqpauli}, we have $\mathbf{P}^2 = U\mathbf{P}^0 U^{\dagger} = \tfrac{1}{2}(\cancel{X}+Y+Y-\cancel{X}) = Y$ for $\mathbf{P}^0 = X$, where the terms $X$ and $-X$ cancel each other out, while the $Y$ terms add up. 
\emph{This is why weighted model counting with negative weights is required; to reason about such constructive and destructive interference, ubiquitous to quantum computing.}

\begin{example}\label{ex:sat-weight}
Following \autoref{ex:sat-update},
we have the satisfying assignments for $F_{\mathbf P^0}$, $F_{\mathbf P^1}$ and $F_{\mathbf P^2}$ as:
\begin{alignat*}{2}
\vspace{-1em}
  SAT(F_{\mathbf P^0}) &~= \{&&x^0_1\no {z^0_1}\no{r^0_1}\}, \\
  SAT(F_{\mathbf P^1}) &~= \{&&x^0_1\no {z^0_1}\no{r^0_1} \ x^1_1 \no{z^1_1} \no{r^1_1} \ u^0, 
  \ x^0_1\no {z^0_1}\no{r^0_1} \ x^1_1z^1_1\no{r^1_1} \ u^0\}, \\
        SAT(F_{\mathbf P^2}) &~= \{
    && x^0_1\no {z^0_1}\no{r^0_1} \
    x^1_1 \no{z^1_1} \no{r^1_1} \ 
    x^2_1 \no{z^2_1} \no{r^2_1} \ u^0 u^1, \
    x^0_1\no {z^0_1}\no{r^0_1} \
    x^1_1 \no{z^1_1} \no{r^1_1} \
    x^2_1 {z^2_1} \no{r^2_1} \ u^0 u^1, \\
    & && x^0_1\no {z^0_1}\no{r^0_1} \
    x^1_1 {z^1_1} \no{r^1_1} \ 
    x^2_1 {z^2_1} \no{r^2_1} \ 
    u^0 u^1, \
    x^0_1\no {z^0_1}\no{r^0_1} \
    x^1_1 {z^1_1} \no{r^1_1} \
    x^2_1 \no{z^2_1} {r^2_1} \
    u^0 u^1\},
\end{alignat*}
with the weight function $W(r^2_1) = -1$, $W(\no {r^2_1}) = 1$, $W(u^0) = W(u^1) = \frac{1}{\sqrt{2}}$ and $W(\no{u^0}) = W(\no{u^1}) = 1$.
Each of the satisfying assignments corresponds to a term in the Pauli decomposition of $\mathbf{P}^2$, which we recall from \autoref{ex:eqpauli} to be
\begin{equation}
\mathbf P^2 = \tfrac{1}{2}X+\tfrac{1}{2}Y+\tfrac{1}{2}Y-\tfrac{1}{2}X = (\tfrac{1}{2} - \tfrac{1}{2}) X + (\tfrac{1}{2} + \tfrac{1}{2}) Y = Y
.
\label{eq:p2}
\end{equation}
For example, the term $-\tfrac{1}{2} X$ is encoded by $x^0_1\no {z^0_1}\no{r^0_1} \
    x^1_1 {z^1_1} \no{r^1_1} \
    x^2_1 \no{z^2_1} {r^2_1} \
    u^0 u^1$
    because it contains $x^2_1 \no{z^2_1}$ (corresponding to $X$) and its weight is $W(r^2_1) \cdot W(u^0) \cdot W(u^1) = (-1) \cdot \frac{1}{\sqrt{2}} \cdot \frac{1}{\sqrt{2}} = -\frac{1}{2}$.
We verify that the constructive interference of the $Y$ terms in \eqref{eq:p2} (i.e. they add up) results in an aggregate Pauli coefficient $\gamma_Y$ of $\mathbf{P}^2$ of $1$:
\[
    MC_W(F_{\mathbf P^2} \wedge F_Y) =  \tfrac{1}{\sqrt{2}}\cdot\tfrac{1}{\sqrt{2}} + \tfrac{1}{\sqrt{2}}\cdot\tfrac{1}{\sqrt{2}} = 1 = \tfrac{1}{2}\Tr(Y\cdot \mathbf P^2).
\]
Similarly, we verify that destructive interference of the $X$ terms in \eqref{eq:p2} (i.e. they cancel) results in the coefficient $\gamma_X$ being $0$:
\[
    MC_W(F_{\mathbf P^2} \wedge F_X) = \tfrac{1}{\sqrt{2}}\cdot\tfrac{1}{\sqrt{2}} - \tfrac{1}{\sqrt{2}}\cdot\tfrac{1}{\sqrt{2}} = 0 = \tfrac{1}{2}\Tr(X\cdot \mathbf P^2).
\qed
\]
\end{example}

\vspace{-2em}
\subsubsection{Toffoli Gate.}
Similar to the way gate encodings of other non-Clifford gates were derived, we can encode the Toffoli gate. To this end, we brute forced the Toffoli gate behavior in the Pauli domain.
 To keep things readable, we will only present a lookup table in the Pauli basis in \autoref{tab:toffoli}, like \autoref{tab:clifford}. 
The corresponding boolean constraint can easily be derived.
To subsequently obtain a minimal (weighted) CNF formula, we applied the Quine–McCluskey algorithm~\cite{mccluskey1956minimization,quine1952problem}.

\newpage

  \begin{table}[t!]
  \setlength{\tabcolsep}{10pt} %
     \caption{Pauli operator lookup table for the Toffoli gate for in/output $P$ and $Q$, highlighting interesting inputs. \autoref{tab:toffoli-full} in  \hyperref[app:toffoli]{Appendix~\ref*{app:toffoli}} provides the full table.}
     \label{tab:toffoli}
    \centering
    \scriptsize
    \begin{tabular}{ >{\centering\arraybackslash}p{1.8cm} | l }

\toprule
 $P\in \mathcal P_3$ & $Q= \mathrm{Toffoli} \cdot P \cdot  \mathrm{Toffoli}^\dagger $ with $Q\in \frac 12  \sum_{i\in [4]} \mathcal P_3 $ or $Q\in \mathcal P_3$  \\
\midrule
$ I \otimes I \otimes Z $ & $(   I \otimes I \otimes Z +   I \otimes Z \otimes Z +   Z \otimes I \otimes Z - Z \otimes Z \otimes Z ) / 2$ \\
$ I \otimes I \otimes X $ & $   I \otimes I \otimes X $ \\
$ I \otimes Z \otimes I $ & $   I \otimes Z \otimes I $ \\
$ I \otimes X \otimes I $ & $(   I \otimes X \otimes I +   I \otimes X \otimes X +   Z \otimes X \otimes I - Z \otimes X \otimes X ) / 2$ \\
$ Z \otimes I \otimes I $ & $   Z \otimes I \otimes I $ \\
$ X \otimes I \otimes I $ & $(   X \otimes I \otimes I +   X \otimes I \otimes X +   X \otimes Z \otimes I - X \otimes Z \otimes X ) / 2$ \\
$ X \otimes X \otimes Z $ & $(   X \otimes X \otimes Z +   Y \otimes Y \otimes Z - X \otimes Y \otimes Y - Y \otimes X \otimes Y ) / 2$ \\
$ Y \otimes Y \otimes Y $ & $(   X \otimes X \otimes Y +   Y \otimes Y \otimes Y - X \otimes Y \otimes Z - Y \otimes X \otimes Z ) / 2$ \\

\bottomrule
    \end{tabular}
  \end{table}

\subsection{Weighted Model Counting-based Algorithm for Equivalence Checking}
The previous subsection explains how to encode the Pauli coefficients of $A P A^{\dagger}$, where $A$ is a unitary and $P$ a Pauli string, in a boolean formula together with a weight function.
We here connect this encoding to \autoref{thm:main-theorem}, which expresses that determining whether a unitary $A$ is equivalent to the identity circuit can be done by checking if $APA^{\dagger} \stackrel{?}{=} P$ for Pauli strings $P \in \{X_j, Z_j \mid j \in [n]\}$.
We use the following lemma, which expresses that for any unitary $A$ and Pauli string $P$, the $P$-Pauli coefficient of $AP A^{\dagger}$ can only become $1$ if $A P A^{\dagger}$ equals $P$.

\begin{lemma}\label{th:wmc}
Let $A$ be a unitary and $P \in \PG$ be any (Hermitian) Pauli operator.
Then $AP A^{\dagger} = P$ if and only if $\frac{1}{2^n} \Tr(A P A^{\dagger} \cdot P) = 1$.
\end{lemma}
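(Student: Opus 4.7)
The plan is to prove the forward direction by a direct trace calculation and the backward direction via a Parseval-type identity in the Pauli basis, exploiting that $APA^{\dagger}$ is both Hermitian and involutory.

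For the ($\Rightarrow$) direction, assume $APA^{\dagger} = P$. Then $\frac{1}{2^n}\Tr(APA^{\dagger}\cdot P) = \frac{1}{2^n}\Tr(P\cdot P) = \frac{1}{2^n}\Tr(I) = 1$, using that Pauli strings are involutory and that $\Tr(I_{2^n}) = 2^n$.

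For the ($\Leftarrow$) direction, the key observation is that $M \defn APA^{\dagger}$ is Hermitian (since $P$ is and $A$ is unitary) and involutory: $M^2 = APA^{\dagger}APA^{\dagger} = AP^2A^{\dagger} = AA^{\dagger} = I$. By \autoref{fact:matrix-decomposition}, decompose $M = \sum_{Q\in\PG} \gamma_Q \cdot Q$, where by \autoref{lemma:hermitian-has-real-pauli-coefficients} each $\gamma_Q \in \mathbb{R}$. The hypothesis says $\gamma_P = \frac{1}{2^n}\Tr(P^{\dagger}\cdot M) = 1$. I will now derive a Parseval identity from the orthonormality of the scaled Pauli basis: since $\{\tfrac{1}{\sqrt{2}^n} Q \mid Q \in \PG\}$ is orthonormal, we have $\Tr(Q\cdot Q') = 2^n \delta_{Q,Q'}$ for any $Q, Q' \in \PG$ (using that $Q^{\dagger} = Q$). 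Therefore,
\begin{equation*}
\sum_{Q \in \PG} \gamma_Q^2 \;=\; \frac{1}{2^n}\sum_{Q,Q'\in\PG}\gamma_Q \gamma_{Q'}\, \delta_{Q,Q'}\cdot 2^n / 2^n \;=\; \frac{1}{2^n}\Tr(M\cdot M) \;=\; \frac{1}{2^n}\Tr(I) \;=\; 1.
\end{equation*}
Since $\gamma_P = 1$, this forces $\sum_{Q \neq P}\gamma_Q^2 = 0$, and because each $\gamma_Q$ is real, $\gamma_Q = 0$ for all $Q \neq P$. Hence $M = \gamma_P \cdot P = P$, which is the desired conclusion $APA^{\dagger} = P$.

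The main obstacle, if any, is getting the Parseval-style identity written cleanly; it relies on two already-stated facts (orthonormality of the scaled Pauli basis and the real-coefficient lemma), plus the simple observation that conjugating the involutory matrix $P$ by a unitary yields another involutory matrix, so its squared Frobenius norm equals $\Tr(I)/2^n = 1$. Everything else is bookkeeping.
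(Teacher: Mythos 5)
Your proof is correct, but the backward direction takes a genuinely different route from the paper's. The paper applies the Cauchy--Schwarz inequality to the Frobenius inner product of $U=APA^{\dagger}$ and $V=P$, shows the hypothesis forces equality, invokes the equality case to get linear dependence $APA^{\dagger}=\lambda P$, and then pins down $\lambda=1$ by taking traces. You instead expand $M=APA^{\dagger}$ in the Pauli basis and use a Parseval identity: since $M$ is Hermitian and involutory, $\Tr(M^{\dagger}M)=\Tr(I)=2^n$, so the real coefficients satisfy $\sum_{Q\in\PG}\gamma_Q^2=1$; the hypothesis $\gamma_P=1$ then kills every other coefficient. Both arguments rest on the same norm computation $\Tr(M^{\dagger}M)=2^n$, but yours replaces the equality case of Cauchy--Schwarz (which requires knowing that equality implies linear dependence) with nothing beyond the orthonormality of the scaled Pauli strings and \autoref{lemma:hermitian-has-real-pauli-coefficients}, both of which the paper has already set up --- so your version is arguably more self-contained in this context, and it also dispenses with the separate step of determining the proportionality constant. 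One small slip: the middle expression in your displayed chain carries a spurious factor ($\frac{1}{2^n}\sum_{Q,Q'}\gamma_Q\gamma_{Q'}\delta_{Q,Q'}\cdot 2^n/2^n$ evaluates to $\frac{1}{2^n}\sum_Q\gamma_Q^2$, not $\sum_Q\gamma_Q^2$); the correct bookkeeping is $\frac{1}{2^n}\Tr(M\cdot M)=\frac{1}{2^n}\sum_{Q,Q'}\gamma_Q\gamma_{Q'}\Tr(Q Q')=\sum_Q\gamma_Q^2$, and the endpoints of your chain are right, so this does not affect the conclusion.
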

\begin{proof}
If $AP_j A^{\dagger} = P_j$, then $\Tr(A P_j A^{\dagger} \cdot P_j) = \Tr(P_j \cdot P_j) = \Tr(I^{\otimes n}) = 2^n$.
For the converse direction, we observe that $\Tr(A P_j A^{\dagger} \cdot P_j)$ is the Frobenius inner product
$\Tr(U^{\dagger} V)$ for $U \defn A P_j A^{\dagger}$ and $V \defn P_j$.
It now follows from the Cauchy-Schwarz inequality $|\langle U,  V\rangle |^2 \leq \langle U,U \rangle \cdot \langle V ,V \rangle$ that
\begin{align*}
|\Tr(A P_j A^{\dagger} \cdot P_j)|^2
&\leq
\Tr((A P_j A^{\dagger})^{\dagger} \cdot A P_j A^{\dagger})
\cdot
\Tr(P_j^{\dagger} \cdot P_j)
\\
&=
\Tr(A P_j A^{\dagger} \cdot A P_j A^{\dagger})
\cdot
\Tr(P_j^{\dagger} \cdot P_j)
& %
\\
&=
\Tr(A P_j \cdot P_j A^{\dagger})
\cdot
\Tr(I^{\otimes n})
&\textnormal{($A$ and $P_j$ are unitary)}
\\
&=
\Tr(A A^{\dagger})
\cdot
\Tr(I^{\otimes n})
&\textnormal{($P^2 = I$ for all $P \in \PG$)}
\\
&=
\Tr(I^{\otimes n})
\cdot
\Tr(I^{\otimes n})
&\textnormal{($A$ is unitary)}
\\
&=
2^n \cdot 2^n = 4^n
\end{align*}
and therefore
$|\Tr(A P_j A^{\dagger} \cdot P_j)| \leq 2^n$.
Since $\Tr(A P_j A^{\dagger} \cdot P_j) = 2^n$ by assumption, the Cauchy-Schwarz inequality is tight, which only happens if $U=AP_jA^{\dagger}$ and $V=P_j$ are linearly dependent.
Thus, there exists a complex number $\lambda$ such that $AP_jA^{\dagger} = \lambda P_j$.
Substituting this expression in $\Tr(AP_jA^{\dagger} \cdot P_j)$ yields $%
\Tr(\lambda P_j \cdot P_j) = \lambda \cdot \Tr(I^{\otimes n}) = \lambda 2^n$, hence $\lambda = 1$ and $A P_j A^{\dagger} = P_j$.
\end{proof}

Combining \autoref{th:wmc} and \autoref{prop:encoding} with \autoref{thm:main-theorem} yields the following corollary.
\autoref{prop:wmc} directly leads to \autoref{alg:eq} which reduces equivalence checking to WMC.

\begin{corollary}\label{prop:wmc}
Let $A$ be an $n$-qubit circuits.
Given a Pauli string $\mathbf P ^0$
and $\mathbf P ^m \defn A \mathbf P ^0 A^{\dagger}$,
which is encoded by $F_{\mathbf P_m}$
with according weight function $W$,
we have $A \equiv \unit$ if and only if $MC_W(F_{\mathbf{P}_m} \wedge F_{P_j}) = 1$
 for all $P_j \in \set{X_j, Z_j \mid j\in [n]}$.
\end{corollary}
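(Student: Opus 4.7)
The plan is to chain together the three prior results in a straightforward way. First I would specialize Theorem~\ref{thm:main-theorem} to the case $V = \unit$: since conjugation by the identity is trivial, $A \equiv \unit$ holds if and only if $AP_j A^{\dagger} = P_j$ for every $j \in [n]$ and every $P \in \{X, Z\}$. This turns the global equivalence question into $2n$ local conditions about conjugation of single-qubit Paulis.

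Next I would apply Lemma~\ref{th:wmc} to each of these $2n$ conditions individually. Since each $P_j \in \set{X_j, Z_j \mid j \in [n]}$ is a Pauli string in $\PG$, Lemma~\ref{th:wmc} gives $AP_j A^{\dagger} = P_j$ if and only if $\tfrac{1}{2^n}\Tr(A P_j A^{\dagger} \cdot P_j) = 1$. So the equivalence $A \equiv \unit$ is now rephrased as a trace condition for each of the $2n$ Pauli strings.

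Finally I would invoke Proposition~\ref{prop:encoding} to translate the trace condition into a weighted model count. Setting $\mathbf{P}^0 \defn P_j$ (which is allowed precisely because $P_j \in \set{X_j, Z_j \mid j \in [n]}$, matching the proposition's hypothesis), we obtain $\mathbf{P}^m = A P_j A^{\dagger}$ and
\[
MC_W(F_{\mathbf{P}^m} \wedge F_{P_j}) \;=\; \tfrac{1}{2^n}\,\Tr\!\bigl(P_j^{\dagger} \cdot A P_j A^{\dagger}\bigr).
\]
Using that each Pauli string is Hermitian (so $P_j^{\dagger} = P_j$) and that the trace is cyclic, the right-hand side equals $\tfrac{1}{2^n}\Tr(A P_j A^{\dagger} \cdot P_j)$, exactly the quantity controlled by Lemma~\ref{th:wmc}. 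Combining the three equivalences gives the claim.

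I do not anticipate a real obstacle: the statement is essentially a composition of already-proved results, and the only minor bookkeeping is to note that $P_j$ being Hermitian identifies $\Tr(P_j^{\dagger} \cdot A P_j A^{\dagger})$ with the quantity appearing in Lemma~\ref{th:wmc}, and that the hypothesis on $\mathbf{P}^0$ in Proposition~\ref{prop:encoding} is satisfied by the choice $\mathbf{P}^0 = P_j$ for $P_j \in \set{X_j, Z_j \mid j \in [n]}$.
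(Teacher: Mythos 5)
Your proposal is correct and matches the paper's argument, which simply states that the corollary follows by combining Theorem~\ref{thm:main-theorem} (specialized to $V=\unit$), Lemma~\ref{th:wmc}, and Proposition~\ref{prop:encoding}. Your spelled-out version, including the observation that Hermiticity of $P_j$ and cyclicity of the trace identify the quantity from Proposition~\ref{prop:encoding} with the one in Lemma~\ref{th:wmc}, is exactly the intended chain of reasoning.
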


\begin{algorithm}
\caption{Quantum circuit equivalence checking algorithm based on WMC.
 Given an $n$-qubit circuit $A=(G_0, G_1, \dots, G_{m-1})$, the algorithm decides whether $A$ is equivalent to the identity circuit.}
\label{alg:eq}
\begin{algorithmic}[1]
\For{$P \in \{X, Z\}$}
\For{$j \in \{1, 2, \dots, n\}$}
\State $\mathbf P^0 \gets +P_j$ 
\State $F_M \gets F_{\mathbf P^0}(\vec v^0)$ 
\For{$k$ ranging from $0$ to $m-1$}
\State $F_M \gets F_M \wedge F_{G_k}(\vec{v}^k)$
\EndFor
\If{WMC($F_M \wedge F_{P_j}(\vec v^m)) \neq 1$ \label{line:equiv-check-wmc}}
\Comment Following \autoref{prop:wmc}
\State \textbf{return} `not equivalent'
\EndIf
\EndFor
\EndFor
\vspace{1mm}
\State \textbf{return} `equivalent'
\end{algorithmic}
\end{algorithm}

\begin{example}
Consider $A = V^{\dagger} U$ where circuit $U = (T, T)$ and $V = (S)$ as in \autoref{ex:eqpauli}.
We show how to reduce the equivalence check $A\stackrel{?}{\equiv} \unit$ to weighted model counting.
First, we consider boolean constraint $F_1 \defn F_{AXA^\dagger}\wedge F_{X}$ encoding the check $A X A^{\dagger} \stackrel{?}{=} X$.
\newline
\scalebox{0.9}{    \begin{minipage}{0.3\textwidth}
    \scalebox{.8}{
    \begin{quantikz}[row sep={0.1cm}]
    \slice{$P^0$} &
    \gate{T}\slice{$P^1$} \gategroup[1,steps=2,style={dashed,rounded corners,fill=blue!20, inner ysep=0pt, inner xsep=0pt},
    			         background,label style={label position=below,anchor=north,yshift=-0.2cm}]{{$U$}}
      & \gate{T}\slice{$P^2$} &
      \gate{S^\dagger}\slice{$P^3$} \gategroup[1,steps=1,style={dashed,rounded corners,fill=blue!20, inner ysep=0pt, inner xsep=0pt}, background,label style={label position=below,anchor=north,yshift=-0.2cm}]{{$V^{\dagger}$}} &
    \end{quantikz}
    }
    \end{minipage}}
    \begin{minipage}{0.6\textwidth}
        \[F_1 =~ \hspace{-1ex}\underbrace{x^0_1\no{z^0_1}\no{r^0}\land T^0_1 \land T^1_1 \land S^{\dag,2}_1}_{F_{AXA^\dagger}} \land \ \hspace{-1ex} \underbrace{x^3_1\no{z^3_1}}_{F_X}\]
    \end{minipage}
\newline
The satisfying assignments of $F_1$ are
\begin{align*}
  SAT{(F_1)} = \{
  &  x^1_1 \no{z^1_1} \no{r^1_1} \ 
  x^2_1 {z^2_1} \no{r^2_1} \ 
  x^3_1 \no{z^3_1} \no{r^3_1} \ 
  u^0 u^1, \
  x^1_1 \no{z^1_1} \no{r^1_1} \ 
  x^2_1 {z^2_1} \no{r^2_1} \ 
  x^3_1 \no{z^3_1} \no{r^3_1} \ 
  u^0 u^1\}.
\end{align*}
so the weighted model count is
$MC_W(F_1) = \sum_{\sigma\in SAT(F_1)}W(\sigma(r^3_1))W(\sigma(u^0))W(\sigma(u^1)) = \tfrac{1}{\sqrt{2}}\cdot\tfrac{1}{\sqrt{2}} + \tfrac{1}{\sqrt{2}}\cdot\tfrac{1}{\sqrt{2}} = 1$.

Now we turn to the check $AZA^{\dagger} \stackrel{?}{=} Z$, obtaining the formula
\[
    F_2 \defn \ \hspace{-1ex}\underbrace{\no{x^0_1}z^0_1\no{r^0}  \land T^0_1 \land T^1_1 \land S^{\dag,2}_1}_{F_{AZA^\dagger}} \land \hspace{-1ex} \ \underbrace{\no{x^3_1} z^3_1}_{F_Z}
\]
which has satisfying assignments $SAT(F_2) = \{\no{x^0_1} {z^0_1} \no{r^0_1}\no{x^1_1} {z^1_1} \no{r^1_1} \no{x^2_1} {z^2_1} \no{r^2_1}\no{x^3_1} {z^3_1} \no{r^3_1} \no{u^0} \no{u^1}\}$,
and $MC_W(F_2)=W(\no{r^3_1})W(\no{u^0})W(\no{u^1}) = 1$.
Since both weighted model counts evaluate to $1$, we conclude that $A\equiv \unit$.
\qed
\end{example}

\section{Implementation: the ECMC tool}

We implemented our method in an open-source tool called \ecmc, available at 
\url{https://github.com/System-Verification-Lab/Quokka-Sharp}.
\ecmc takes two quantum circuits in QASM format~\cite{cross2022openqasm} as input. 
It encodes these circuits to a sequence of $2n$ weighted conjunctive normal form (CNF) formulas
as explained in \autoref{sec3}, and then uses the weighted model counter GPMC~\cite{hashimoto2020gpmc} to solve these constraints in parallel with 16 processes,
terminating as soon as one returns a negative result.
We choose GPMC as it supports the negative weights in our encoding and
performs the best among solvers with that capability in the model counting competition 2023~\cite{modelcounting2023}.
To demonstrate the effectiveness of our method,
we conducted a set of broad experiments as discussed in the following.

We performed equivalence checking of quantum circuits
comparing our method against the state-of-the-art tool \qcec~\cite{advanced2021burgholzer}, which runs
different algorithms and heuristics based on ZX calculus and decision diagrams (shorted as DD) in portfolio with 16 parallel threads~\cite{xu2009satzilla2009}.
Similar to \ecmc,
\qcec also terminates earlier when one thread returns ``non-equivalent''.
Since the ZX-calculus based method is still incomplete for universal quantum circuits,
in the sense that it is only capable of proving equivalence,
we use this tool under two settings:
one is the default setting which uses DD and ZX calculus in portfolio;
the other is to exclusively enable DD~\cite{advanced2021burgholzer}.
We use two families of circuits:
(i) random Clifford+$T$ circuits, which mimic hard problems arising in quantum chemistry~\cite{wright2022chemistry} and quantum many-body physics~\cite{random2023fisher}; 
(ii) all benchmarks from the public benchmark suite MQT Bench~\cite{mqt2023quetschlich},
which includes many important quantum algorithms like QAOA, VQE, QNN, Grover, etc.
All experiments have been conducted on a 3.5 GHz M2 Machine with MacOS 13 and 16 GB RAM. 
We set the time limit to be 5 minutes (300 seconds) and include the time to read a QASM file, construct the weighted CNF and perform the model counting in all reported runtimes.

\begin{figure}[ht] \label{ fig7} 
  \begin{minipage}[b]{0.5\linewidth}
    \includegraphics[width=\linewidth]{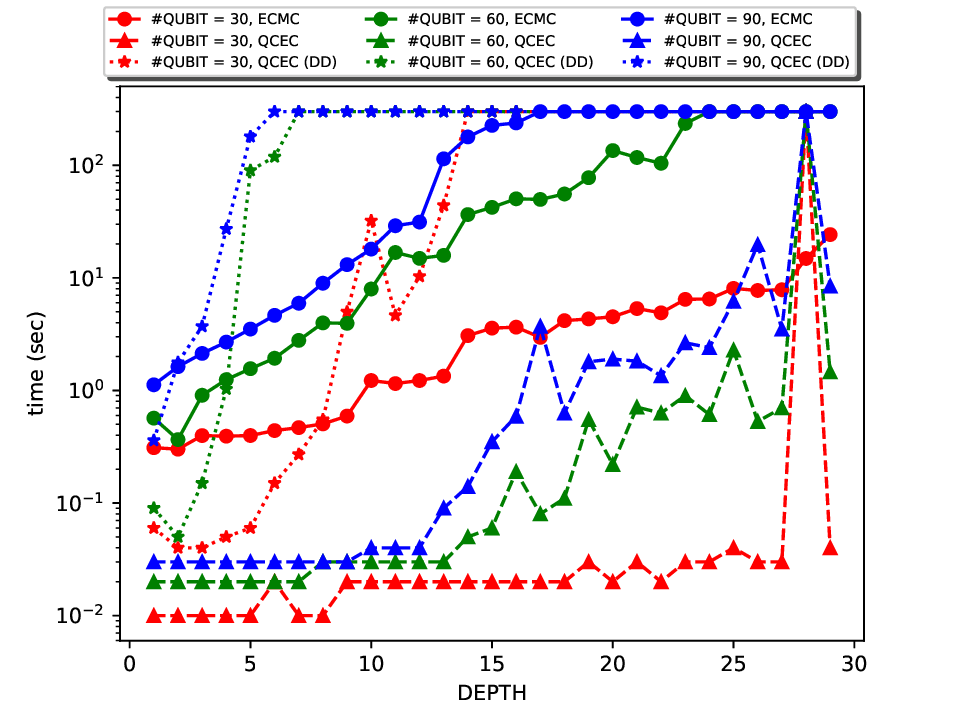} 
       \caption{Growing circuit depth} 
  \end{minipage} 
  \begin{minipage}[b]{0.5\linewidth}
    \includegraphics[width=\linewidth]{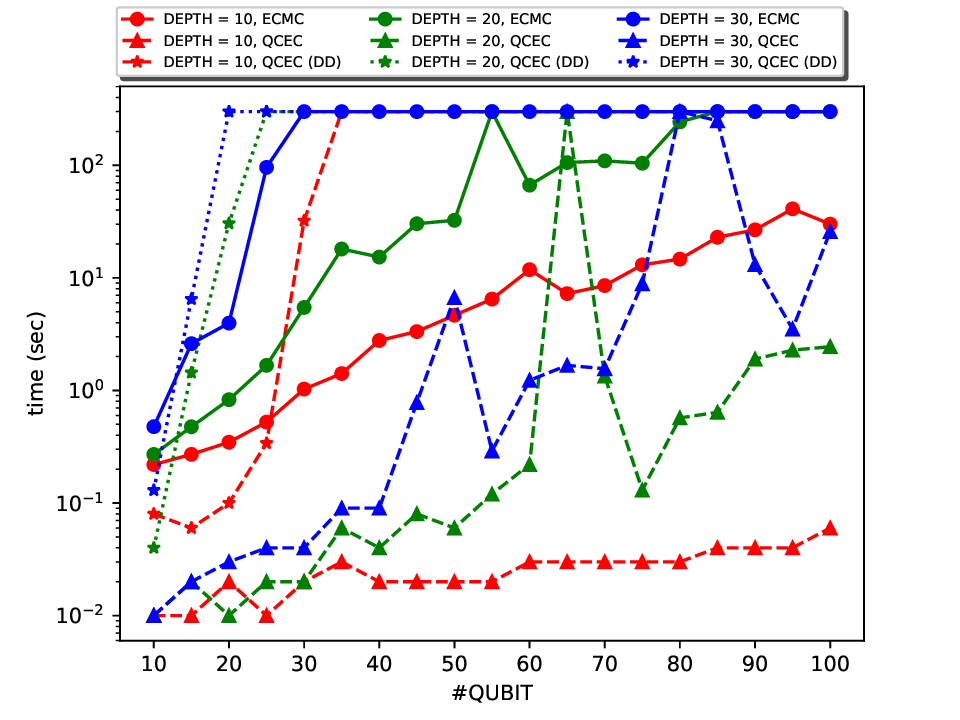} 
    \caption{Growing qubit counts} 
  \end{minipage} 
  \begin{minipage}[b]{0.5\linewidth}
    \includegraphics[width=\linewidth]{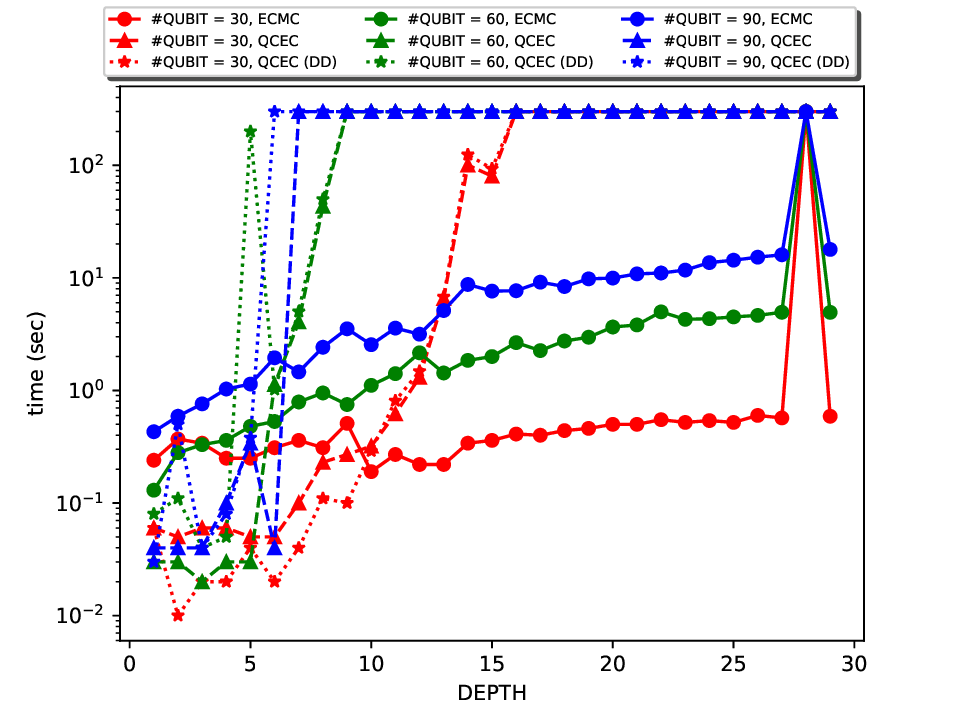} 
    \caption{Growing circuit depth} 
  \end{minipage}
  \hfill
  \begin{minipage}[b]{0.5\linewidth}
    \includegraphics[width=\linewidth]{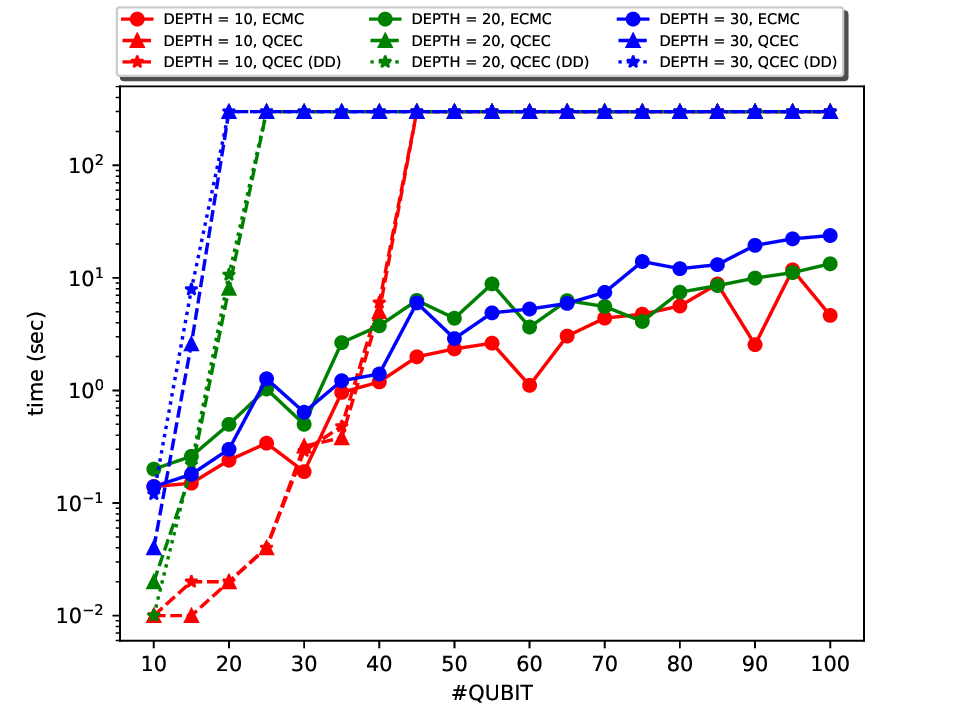} 
    \caption{Growing qubit counts} 
  \end{minipage} 
    \caption{Equivalence check of typical random Clifford+T circuits against their optimized circuits (equivalent cases, Fig 1 \& Fig 2) and optimized circuits with one random gate missing (non-equivalent cases, Fig 3 \& Fig 4). (Both vertical axes are on a logarithmic scale.)
  }
  \label{fig:scale}
\end{figure}

\subsection{Results}
First, 
to show the scalability of both methods on checking equivalence,
we consider random circuits that resemble typical oracle implementations --- random quantum circuits with varying qubits and depths, which comprise the $CX$, $H$, $S$, and $T$ gates with appearing ratio 10\%, 35\%,  35\%,  20\%~\cite{peham2022equivalence}.
We use a ZX-calculus tool PyZX~\cite{PyZXLS} to generate optimized circuits, to construct equivalent, yet very different, counterparts.
To construct non-equivalent instances,
we inject an error by removing one random gate from the corresponding optimized circuits.
So by construction, we know the correct answer for all equivalence checking instances in advance.
The resulting runtimes can be seen in \autoref{fig:scale}.

In addition to random circuits, 
to test structural quantum circuits,
we empirically evaluated our method on the MQTBench benchmark set~\cite{mqt2023quetschlich}.
We also generate the optimized circuits of the circuits from MQT-bench using PyZX~\cite{PyZXLS}.
To generate non-equivalent instances,
three kinds of errors are injected into the optimized circuits:
one with an random gate removed,
one where a random CNOT gate is flipped, switching control and target qubits,
and one where the phase of the angle of a random rotation gate is shifted.
For the last error, 
since many optimizations on rotation gates involve phase shifts in the rotation angles,
we consider two sizes of phase shift: one with the angle of a random rotation gate added by $10^{-4}$,
one with the angle added by $10^{-7}$.
We note that this experimental setup is stronger than the one used in~\cite{peham2022equivalence}, where only two errors are considered: bit flip and phase shift without giving the shifting scale.

We present a representative subset of equivalence checking results in \autoref{tab:algeq}.
The first three columns list the number of qubits $n$, the number of gates $|G|$ in original circuits and the number of gates $|G'|$ in optimized circuits.
Then we give the runtime of the weighted model counting tool \ecmc, the decision diagram-based \qcec (DD) and the default setting of \qcec respectively.
For the non-equivalent cases,
we show the flipped-CNOT and one-gate-missing error in \autoref{tab:fpgm}.
\begin{table}[b!]
  \setlength{\tabcolsep}{6pt} %
     \caption{Results of verifying equivalence of circuits from MQT bench against optimized circuits. 
     For cases within time limit,
     we give runtime (sec),
     while $>300$ represents a timeout (5 min) and \noinfo
     \ means that the result was `unknown'.
     }
     \label{tab:algeq}
    \centering
    \scriptsize
    \begin{tabular}{ c || r r r || c | c | c}
      \hline
{Algorithm} & {n} & {$|G|$} & {$G'$}  & {\ecmc} &  {\qcec (DD)}   & \qcec\\
      \hline
\multirowcell{3}{graphstate}  &  16  &  160  &  32  &  0.41  &  0.11 & 0.01 \\
  &  32  &  320  &  64  &  1.67  &  0.1 & 0.01 \\
  &  64  &  640  &  128  &  8.11  &  24.37 & 0.02\\ \hline
\multirowcell{3}{grover \\ (noancilla)}   &  5  &  499   &  629  &  \timeout  &  0.12 & 0.04\\
  &  6  &  1568  &  1870  &  \timeout  &  6.04 & \noinfo \\ 
  &  7  &  3751  &  5783  &  \timeout  &  \timeout & 1.97 \\ \hline
\multirowcell{3}{qaoa}  
  &  7  &  133  &  117  &  0.48  &  0.02  & 0.01 \\ 
  &  9  &  171  &  296  &  1.44  &  0.11 & \noinfo \\ 
  &  11  &  209  &  359  &  1.56  &  0.28 & 0.01 \\ 
  \hline
    \multirowcell{3}{qnn} &  2  &  43  &  36  &  0.06  &  0.01 & 0.01 \\
  &  8  &  319  &  494  &  \timeout  &  0.24 &\noinfo\\ 
  &  16  &  1023  &  2002  &  \timeout  &  \timeout & \noinfo\\ \hline
 \multirowcell{3}{qft}  &  2  &  14  &  14  &  0.02  &  0.01 & 0.01\\ 
  &  8  &  176  &  228  &  36.56  &  0.07 & 0.02 \\ 
 &  16  &  672  &  814  &  \timeout  &  18.28 & \noinfo\\ \hline
 \multirowcell{3}{qpe \\ (inexact)}   &  16  &  712  &  848  &  \timeout  &  \timeout & \noinfo \\
 &  32  &  2712  &  3179  &  \timeout  &  \timeout &  \timeout \\
 &  64  &  10552  &  9695  &  \timeout  &  \timeout &  \timeout\\ \hline
\multirowcell{3}{vqe}  &  5  &  83  &  83  &  1.07  &  0.01 & 0.01\\ 
&  10  &  168  &  221  &  \timeout  &  0.04 & 0.02\\ 
&  15  &  253  &  349 &  \timeout  &  0.12 & 0.05\\ \hline
\multirowcell{3}{wstate}  &  16  &  271  &  242  &  1.74  &  23.68 & \noinfo\\ 
  &  32  &  559  &  498  &  9.52  &  \timeout & \noinfo\\ 
  &  64  &  1135  &  1010  &  70.51  &  \timeout &  \timeout\\ \hline
     \end{tabular}
  \end{table}

The first three columns are the same as \autoref{tab:algeq} and then the performance of all three tools on CNOT flipped error and one-gate-missing error respectively. 
Finally, \autoref{tab:shift} shows the performance of phase shift errors,
where Shift-$10^{-4}$ (resp. Shift-$10^{-4}$) denotes adding $10^{-4}$ (resp. $10^{-7}$) to the phase of a random rotation gate.
The complete results can be found in \hyperref[appendix]{Appendix~\ref*{appendix}}.

\begin{table}[b!]
  \setlength{\tabcolsep}{2pt} %
     \caption{Results of verifying non-equivalence of circuits from MQT bench against optimized circuits with flipped CNOT gate (Flipped) and one missing gate (1 Gate Missing). 
     For cases within time limit,
     we give runtime (sec),
     while $>300$ represents a timeout (5 min).
     }
     \label{tab:fpgm}
    \centering
    \scriptsize
    \begin{tabular}{ c || r r r | c | c | c | c | c | c }
      \hline
    \multirowcell{2}{Algorithm}& \multirowcell{2}{n} & \multirowcell{2}{$|G|$}  & \multirowcell{2}{$|G'|$} & \multicolumn{3}{c|}{Flipped}  &  \multicolumn{3}{c}{1 Gate Missing}  \\
       &  &  &  & {\ecmc} &  {\qcec} (DD) & \qcec & {\ecmc} &  {\qcec} (DD)  & \qcec \\
       \hline
\multirowcell{3}{grover \ (noancilla)}  &  5  &  499  &  629  &  \timeout  &  0.04 & 0.02 &  1.26  &  0.05 & 0.02 \\ 
  &  6  &  1568  &  1870  &  \timeout  &  0.14 & 0.06  &  \timeout  &  0.13 & 0.05\\
  &  7  &  3751  &  5783  &  \timeout  &  1.41 & 0.46  &  24.27  &  5.05 & 0.35 \\ \hline
\multirowcell{3}{qaoa}  &  7  &  133  &  117  &  0.36  &  0.03  & 0.01 & 0.39  &  0.03 & 0.01\\ 
  &  9  &  171  &  296  &  0.77  &  0.06 & 0.03 &  0.8  &  0.07 & 0.02\\ 
  &  11  &  209  &  359  &  3.32  &  0.53  & 0.23 &  2.28  &  0.87 & 0.09 \\ \hline
\multirowcell{3}{qft}  &  2  &  14  &  14  &  0.05  &  0.02 & 0.01  &  0.1  &  0.04 & 0.01 \\ 
  &  8  &  176  &  228  &  1.53  &  0.05 & 0.01 &  1.89  &  0.04 &0.01 \\ 
  &  16  &  672  &  814  &  2.7  &  12.47 & 5.47 &  6.68  &  75.73 & 1.33\\ \hline
\multirowcell{3}{qnn}   &  2  &  43  &  36  &  0.24  &  0.07 & 0.01 &  0.22  &  0.1 & 0.01\\ 
  &  8  &  319  &  494  &  \timeout  &  0.59 & 0.05  &  \timeout  &  0.61 & 0.05\\ 
  &  16  &  1023  &  2002  &  \timeout  &  \timeout &  97.22  &  \timeout  &  \timeout & 90.58\\ \hline
\multirowcell{3}{qpe\\(inexact)}  &  16  &  712  &  848  &  19.97  &  1.72 & 4.98 &  19.59  &  186.29 & 1.12 \\ 
  &  32  &  2712  &  3179  &  13.28  &  \timeout  & \timeout & 22.0  &  \timeout &  \timeout \\ 
  &  64  &  10552  &  9695  &  \timeout  &  \timeout  & \timeout & 75.46  &  \timeout &  \timeout \\ \hline
\multirowcell{3}{vqe}  &  5  &  83  &  83  &  0.81  &  0.05 & 0.01 &  0.37  &  0.22 & 0.01\\ 
  &  10  &  168  &  221  &  55.06  &  0.58  & 0.18 &  3.98  &  3.9 & 0.03\\ 
  &  15  &  253  &  349  &  4.08  &  0.94  & 81.03 &  5.09  &  \timeout & 0.05 \\ \hline
\multirowcell{3}{wstate}  &  16  &  271  &  242  &  6.47  &  0.41 & 0.03 &  1.46  &  0.37 & 0.02 \\ 
 &  32  &  559  &  498  &  13.65  &  2.0  & \timeout &  2.28  &  \timeout & 59.07 \\ 
  &  64  &  1135  &  1010  &  13.32  &  \timeout &  \timeout  &  6.48  &  \timeout & \timeout\\ \hline
\end{tabular}
  \end{table}

\subsection{Discussion}
For random circuits,
\autoref{fig:scale} shows that the runtime of \ecmc  exhibits a clear correlation with the size of the circuits,
while \qcec and \qcec(DD) are very fast for small size circuits,
for non-equivalent cases, both of them are less scalable and reach time limit much earlier than \ecmc,
For the equivalent cases,
\qcec benefits from ZX calculus and outperforms the other two methods.
We suspect that \qcec (DD) shows poor performance when solving random circuits because these circuits don't contain the structure found in quantum algorithms, which decision diagrams can typically exploit.

\begin{table}[t!]
    \caption{Results of verifying non-equivalence of circuits from MQT bench against optimized circuits with $10^{-4}$ size and $10^{-7}$ size phase shift in one random rotation gate. 
     For cases within time limit,
     we give runtime (sec),
     while $>300$ represents a timeout (5 min), ``wrong''  a wrong result
    and \noinfo \  that the results was `unknown'.
     }
    \label{tab:shift}
    \scriptsize
  \setlength{\tabcolsep}{1pt} %
    \centering
    \begin{tabular}{ c | r r r| c c c | c c c}
      \hline
    \multirowcell{2}{Algorithm}& \multirowcell{2}{n} & \multirowcell{2}{$|G|$} & \multirowcell{2}{$|G'|$}  & \multicolumn{3}{c|}{Shift-$10^{-4}$}  &  \multicolumn{3}{c}{Shift-$10^{-7}$} \\
       &  & &  &  {\ecmc} & {QCEC} (DD) & {QCEC}  &  {\ecmc} & {QCEC (DD)}  & {QCEC} \\
       \hline
\multirowcell{3}{groundstate} & 4 & 180 & 36 & 0.26 & \ERROR & \ERROR & \ERROR & \ERROR & \ERROR\\& 12 & 1212 & 164 & \timeout & \ERROR & \ERROR & \timeout & \ERROR & \ERROR\\ 
 & 14 & 1610 & 206 & \timeout & \ERROR & \ERROR & \timeout & \ERROR & \ERROR\\ \hline
\multirowcell{3}{qaoa} & 7 & 133 & 117 & 0.15  & 0.03 & \noinfo & 0.15  & \timeout & \noinfo \\
& 9 & 171 & 296 & 0.29  &{\ERROR} & \noinfo & 0.32  & \timeout & \noinfo\\
& 11 & 209 & 359 & 0.33  & 0.12 & 0.1 & 0.32  & \timeout & \ERROR\\
\hline
\multirowcell{3}{qft} & 2 & 14 & 14 & 0.02  & 0.01 & 0.01 & 0.02  & 0.01 & 0.01å \\
& 8 & 176 & 228 & 0.2  &{\ERROR} & \noinfo & 0.21  & \timeout &  \noinfo\\
& 16 & 672 & 814 & 0.79  &{\ERROR} & \noinfo & 0.92  & \timeout & \noinfo \\
\hline
\multirowcell{3}{qnn} & 2 & 43 & 36 & 0.04  &{\ERROR} & {\ERROR}  & 0.04  & {\ERROR} & {\ERROR}  \\
 & 8 & 319 & 494 & \timeout & 0.24 & {\ERROR} & \timeout & 56.55 & \noinfo  \\
 & 16 & 1023 & 2002 & \timeout & \timeout & \noinfo & \timeout& \timeout & \noinfo \\
\hline
\multirowcell{3}{qpeinexact} & 16 & 712 & 848 & 8.59  & \timeout & \noinfo & 11.9  & \timeout & \noinfo \\
& 32 & 2712 & 3179 & \timeout & \timeout & \timeout & \timeout & \timeout & \timeout \\
 & 64 & 10552 & 9695 & \timeout & \timeout & \timeout & \timeout & \timeout & \timeout \\
\hline
\multirowcell{3}{routing} & 2 & 43 & 29 & 0.06  &{\ERROR} & \ERROR & 0.05  & \timeout & \ERROR \\
 & 6 & 135 & 142 & 0.33  & 0.02 & 0.01 & 2.49  & 0.03 & 0.01 \\
& 12 & 273 & 409 & 144.3  & 0.05 & 0.03 & \timeout & 0.09 & 0.04 \\
\hline
\multirowcell{3}{wstate} & 16 & 271 & 242 & 0.33  & 12.85 & \noinfo & 0.23  & 11.67 &\noinfo \\
& 32 & 559 & 498 & 1.55  & \timeout & \noinfo & 1.28  & \timeout & \noinfo \\
& 64 & 1135 & 1010 & 5.4  & \timeout & \timeout & 5.24  & \timeout & \timeout \\
\hline
     \end{tabular}
  \end{table}

When considering structural quantum circuits,
the results vary between equivalent and non-equivalent instances.
For equivalent instances, \qcec (DD) significantly surpasses \ecmc on Grover, QFT and QNN,
primarily due to the decision diagram-based method's proficiency in handling circuits featuring 
repeated structures and oracles.
While for those circuits featuring a large number of rotation gates with various rotation angles, 
like graphstate and wstate, \ecmc demonstrates clear advantages. 
Moreover, 
the default \qcec is much faster than \qcec (DD) on all cases while it reports ``no information'' for many cases as ZX calculus method and decision diagram method give different answers.

For non-equivalent instances,
since \ecmc can terminate when a single out of $2n$ WMC calls returns a negative result,
it shows better performance than checking equivalence.
For example, in the case of QPE, where both tools face time constraints when checking equivalent instances,
\ecmc can efficiently demonstrate non-equivalence and resolve the majority of cases within the time limit, while both \qcec and \qcec (DD) still gets timeout in most instances.

In all instances, \ecmc outperforms both \qcec and \qcec (DD) on graph state and wstate, each featuring many rotation gates.
When dealing with rotation gates,
decision diagrams might suffer from numerical instability~\cite{9023381,peham2022equivalence,9023381},
as can be clearly observed in \autoref{tab:shift} for the instances with errors in the phase shift,
where both \qcec and \qcec (DD) get wrong results for many benchmarks.
In contrast, the WMC approach ---also numerical in nature--- iteratively computes a sum of products, 
which we think avoids numerical instability.  \autoref{tab:shift} also demonstrates this point as \ecmc yields the correct answer for most benchmarks with $10^{-4}$ and $10^{-7}$-size error.
In contrast, the default \qcec gives no answer for a large amount of cases.

\section{Related Work}
\vspace{-1em}

Bauer et al. \cite{Bauer2023symQV} verify quantum programs  by encoding the verification problem in SMT, using an undecidable theory of nonlinear real arithmetic with trigonometric expression. An SMT theory for quantum computing was proposed in~\cite{chen2023theory}.
Berent et al. \cite{berent2022towards} realize Clifford circuit simulator and equivalence checker based on a SAT encoding. The equivalence checker was superseded by the deterministic polynomial-time algorithm proposed and implemented in~\cite{thanos2023fast}.
Using weighted model counting, universal quantum circuit simulation is realized in~\cite{qcmc}, which we extend by providing encodings for the $CZ$ and Toffoli gates and which we apply to circuit equivalence checking according to the approach of~\cite{thanos2023fast}.
SAT solvers have proven successful in quantum compilation~\cite{thanos2024automated}, e.g., for reversible simulation of circuits \cite{robert2011atpg} and optimizing space of quantum circuits \cite{meuli2018sat,quist2023optimizing}.

The ZX calculus~\cite{coecke2011interactingZXAlgebra} offers a diagrammatic approach to manipulate and analyze quantum circuits. A circuit is almost trivially expressible as a diagram, but the diagram language is more powerful and circuit extraction is consequently \#\P-complete~\cite{hardnessExtraction}. It has proven enormously successful in applications from equivalence checking~\cite{peham2022equivalence,Wille_ZX}, to circuit optimization~\cite{PyZXLS} and simulation~\cite{kissinger_simulating_2022}.

Decision diagrams~\cite{qmdd} (DDs) have been applied to simulated quantum circuits and checking their equivalence~\cite{advanced2021burgholzer} and synthesis~\cite{zulehner2017improving}.
To reduce circuits, the work in~\cite{jimenezpastor2024forward} uses bisimulation. In some cases this approach reduces simulation time compared to DDs~\cite{jimenezpastor2024forward}.

\section{Conclusions}
\label{sec:conclusion}

We have shown how weighted model counting can be used for circuit equivalence checking.
To achieve this, we considered quantum states in the Pauli basis, which allows for an efficient reduction of the equivalence checking problem to weighted model counting. We extended a linear-length encoding with the three-qubit Toffoli gate, so that all common non-Clifford gates are supported (previously the $T$, phase shift and rotation gates were already supported).
Given two $n$-qubit quantum circuits, their equivalence (up to global phase) can be decided by $2n$ calls to a weighted model counter, each with an encoding that is linear in the circuit size. Our open source implementation demonstrates that this technique is competitive to state-of-the-art methods based on a combination of decision diagrams and ZX calculus. This result demonstrates the strength of classical reasoning tools can transfer to the realm of quantum computing, despite the general `quantum'-hardness of these problems.

In future work, we plan to extract diagnostics for non-equivalent circuits from the satisfying assignments of the model counter.

\bibliographystyle{plain}
\bibliography{lit} 

\begin{thebibliography}{10}

\bibitem{aaronson2008improved}
Scott Aaronson and Daniel Gottesman.
\newblock Improved simulation of stabilizer circuits.
\newblock {\em Physical Review A}, 70(5), nov 2004.

\bibitem{amy2018towards}
Matthew Amy.
\newblock Towards large-scale functional verification of universal quantum
  circuits.
\newblock {\em arXiv:1805.06908}, 2018.

\bibitem{Bauer2023symQV}
Fabian Bauer-Marquart, Stefan Leue, and Christian Schilling.
\newblock {symQV}: Automated symbolic verification of quantum programs.
\newblock In Marsha Chechik, Joost-Pieter Katoen, and Martin Leucker, editors,
  {\em Formal Methods}, pages 181--198, Cham, 2023. Springer International
  Publishing.

\bibitem{berent2022towards}
Lucas Berent, Lukas Burgholzer, and Robert Wille.
\newblock Towards a {SAT} encoding for quantum circuits: A journey from
  classical circuits to clifford circuits and beyond.
\newblock In {\em SAT 2022}. Schloss Dagstuhl - Leibniz-Zentrum für
  Informatik, 2022.

\bibitem{biere2009bounded}
Armin Biere, Alessandro Cimatti, Edmund~M Clarke, Ofer Strichman, and Yunshan
  Zhu.
\newblock Bounded model checking.
\newblock {\em Handbook of satisfiability}, 185(99):457--481, 2009.

\bibitem{biere2009handbook}
Armin Biere, Marijn Heule, Hans van Maaren, and Toby Walsh, editors.
\newblock {\em Handbook of Satisfiability}, volume 185 of {\em Frontiers in
  Artificial Intelligence and Applications}.
\newblock {IOS} Press, 2009.

\bibitem{advanced2021burgholzer}
Lukas Burgholzer and Robert Wille.
\newblock Advanced equivalence checking for quantum circuits.
\newblock {\em IEEE Transactions on Computer-Aided Design of Integrated
  Circuits and Systems}, 40(9):1810–1824, September 2021.

\bibitem{calderbank1996quantum}
A.~R. Calderbank and Peter~W. Shor.
\newblock Good quantum error-correcting codes exist.
\newblock {\em Phys. Rev. A}, 54:1098--1105, Aug 1996.

\bibitem{chavira2008probabilistic}
Mark Chavira and Adnan Darwiche.
\newblock On probabilistic inference by weighted model counting.
\newblock {\em Artificial Intelligence}, 172(6):772--799, 2008.

\bibitem{chen2023theory}
Yu-Fang Chen, Philipp R{\"u}mmer, and Wei-Lun Tsai.
\newblock A theory of cartesian arrays (with applications in quantum circuit
  verification).
\newblock In {\em International Conference on Automated Deduction}, pages
  170--189. Springer, 2023.

\bibitem{coecke2011interactingZXAlgebra}
Bob Coecke and Ross Duncan.
\newblock Interacting quantum observables: categorical algebra and
  diagrammatics.
\newblock {\em New Journal of Physics}, 13(4):043016, 2011.

\bibitem{cross2022openqasm}
Andrew Cross, Ali Javadi-Abhari, Thomas Alexander, Niel De~Beaudrap, Lev~S.
  Bishop, Steven Heidel, Colm~A. Ryan, Prasahnt Sivarajah, John Smolin, Jay~M.
  Gambetta, and Blake~R. Johnson.
\newblock {OpenQASM3}: A broader and deeper quantum assembly language.
\newblock {\em ACM Transactions on Quantum Computing}, 3(3):1–50, September
  2022.

\bibitem{dawson2005solovaykitaev}
Christopher~M. Dawson and Michael~A. Nielsen.
\newblock The {Solovay-Kitaev} algorithm.
\newblock {\em Quantum Inf. Comput.}, 6(1):81--95, 2006.

\bibitem{hardnessExtraction}
Niel de~Beaudrap, Aleks Kissinger, and John van~de Wetering.
\newblock Circuit extraction for {ZX}-diagrams can be \#{P}-hard.
\newblock In {\em ICALP 2022}. Schloss Dagstuhl – Leibniz-Zentrum für
  Informatik, 2022.

\bibitem{feng2023verification}
Nick Feng, Lina Marsso, Mehrdad Sabetzadeh, and Marsha Chechik.
\newblock Early verification of legal compliance via bounded satisfiability
  checking.
\newblock In Constantin Enea and Akash Lal, editors, {\em Computer Aided
  Verification}, pages 374--396, Cham, 2023. Springer Nature Switzerland.

\bibitem{random2023fisher}
Matthew~P.A. Fisher, Vedika Khemani, Adam Nahum, and Sagar Vijay.
\newblock Random quantum circuits.
\newblock {\em Annual Review of Condensed Matter Physics}, 14(1):335–379,
  March 2023.

\bibitem{gay2011stabilizer}
Simon~J. Gay.
\newblock Stabilizer states as a basis for density matrices.
\newblock {\em CoRR}, abs/1112.2156, 2011.

\bibitem{gomes2021model}
Carla~P Gomes, Ashish Sabharwal, and Bart Selman.
\newblock Model counting.
\newblock In {\em Handbook of satisfiability}, pages 993--1014. IOS press,
  2021.

\bibitem{gottesman1997stabilizer}
Daniel Gottesman.
\newblock {\em Stabilizer codes and quantum error correction}.
\newblock PhD thesis, California Institute of Technology, 1997.

\bibitem{modelcounting2023}
Markus Hecher and Johannes~K. Fichte.
\newblock Model counting competition 2023.
\newblock \url{https://mccompetition.org/}.
\newblock Accessed: 2024-01-07.

\bibitem{hong2022equivalence}
Xin Hong, Yuan Feng, Sanjiang Li, and Mingsheng Ying.
\newblock Equivalence checking of dynamic quantum circuits.
\newblock In {\em Proceedings of the 41st IEEE/ACM International Conference on
  Computer-Aided Design}, ICCAD '22, New York, NY, USA, 2022. Association for
  Computing Machinery.

\bibitem{hong2021approximate}
Xin Hong, Mingsheng Ying, Yuan Feng, Xiangzhen Zhou, and Sanjiang Li.
\newblock Approximate equivalence checking of noisy quantum circuits.
\newblock In {\em 2021 58th ACM/IEEE Design Automation Conference (DAC)}, pages
  637--642, 2021.

\bibitem{9586214}
Xin Hong, Mingsheng Ying, Yuan Feng, Xiangzhen Zhou, and Sanjiang Li.
\newblock Approximate equivalence checking of noisy quantum circuits.
\newblock In {\em 2021 58th ACM/IEEE Design Automation Conference (DAC)}, pages
  637--642, 2021.

\bibitem{janzing2005non}
Dominik Janzing, Pawel Wocjan, and Thomas Beth.
\newblock ``non-identity-check'' is {QMA}-complete.
\newblock {\em International Journal of Quantum Information}, 3(03):463--473,
  2005.

\bibitem{ji2009non}
Zhengfeng Ji and Xiaodi Wu.
\newblock Non-identity check remains {QMA}-complete for short circuits.
\newblock {\em arXiv:0906.5416}, 2009.

\bibitem{jimenezpastor2024forward}
Antonio Jim\'enez-Pastor, Kim~G. Larsen, Mirco Tribastone, and Max
  Tschaikowski.
\newblock Forward and backward constrained bisimulations for quantum circuits,
  2024.

\bibitem{jones2024decomposing}
Tyson Jones.
\newblock Decomposing dense matrices into dense pauli tensors.
\newblock {\em arXiv:2401.16378}, 2024.

\bibitem{PyZXLS}
Aleks Kissinger and John van~de Wetering.
\newblock {PyZX}: Large scale automated diagrammatic reasoning.
\newblock In {\em QPL}, 2019.

\bibitem{kissinger_simulating_2022}
Aleks Kissinger and John van~de Wetering.
\newblock Simulating quantum circuits with {ZX}-calculus reduced stabiliser
  decompositions.
\newblock {\em Quantum Science and Technology}, 7(4):044001, October 2022.
\newblock arXiv:2109.01076 [quant-ph].

\bibitem{kitaev1997quantum}
A~Yu Kitaev.
\newblock Quantum computations: algorithms and error correction.
\newblock {\em Russian Mathematical Surveys}, 52(6):1191, 1997.

\bibitem{kitaev2002classical}
Alexei~Yu Kitaev, Alexander Shen, and Mikhail~N Vyalyi.
\newblock {\em Classical and quantum computation}.
\newblock American Mathematical Soc., 2002.

\bibitem{liu2006consistency}
Yi-Kai Liu.
\newblock Consistency of local density matrices is qma-complete.
\newblock In {\em Approximation, Randomization, and Combinatorial Optimization.
  Algorithms and Techniques: 9th International Workshop on Approximation
  Algorithms for Combinatorial Optimization Problems, APPROX 2006 and 10th
  International Workshop on Randomization and Computation, RANDOM 2006,
  Barcelona, Spain, August 28-30 2006. Proceedings}, pages 438--449. Springer,
  2006.

\bibitem{mccluskey1956minimization}
Edward~J McCluskey.
\newblock Minimization of boolean functions.
\newblock {\em The Bell System Technical Journal}, 35(6):1417--1444, 1956.

\bibitem{qcmc}
Jingyi Mei, Marcello Bonsangue, and Alfons Laarman.
\newblock Simulating quantum circuits by model counting, 2024.

\bibitem{mei2024simulating}
Jingyi Mei, Marcello Bonsangue, and Alfons Laarman.
\newblock Simulating quantum circuits by model counting, 2024.

\bibitem{meuli2018sat}
Giulia Meuli, Mathias Soeken, and Giovanni De~Micheli.
\newblock {SAT}-based \{{CNOT}, {T}\} quantum circuit synthesis.
\newblock In Jarkko Kari and Irek Ulidowski, editors, {\em Reversible
  Computation}, pages 175--188, Cham, 2018. Springer International Publishing.

\bibitem{qmdd}
D.~Michael Miller and Mitchell~A. Thornton.
\newblock Qmdd: A decision diagram structure for reversible and quantum
  circuits.
\newblock {\em 36th International Symposium on Multiple-Valued Logic
  (ISMVL'06)}, pages 30--30, 2006.

\bibitem{nielsen2000quantum}
Michael~A Nielsen and Isaac~L Chuang.
\newblock Quantum information and quantum computation.
\newblock {\em Cambridge: Cambridge University Press}, 2(8):23, 2000.

\bibitem{9023381}
Philipp Niemann, Alwin Zulehner, Rolf Drechsler, and Robert Wille.
\newblock Overcoming the tradeoff between accuracy and compactness in decision
  diagrams for quantum computation.
\newblock {\em IEEE Transactions on Computer-Aided Design of Integrated
  Circuits and Systems}, 39(12):4657--4668, 2020.

\bibitem{oztok2015atopdown}
Umut Oztok and Adnan Darwiche.
\newblock A top-down compiler for sentential decision diagrams.
\newblock In {\em SEA 2020}, IJCAI'15, page 3141–3148. AAAI Press, 2015.

\bibitem{peham2022equivalence}
Tom Peham, Lukas Burgholzer, and Robert Wille.
\newblock Equivalence checking of quantum circuits with the {ZX}-calculus.
\newblock {\em IEEE Journal on Emerging and Selected Topics in Circuits and
  Systems}, 12(3):662–675, September 2022.

\bibitem{Wille_ZX}
Tom Peham, Lukas Burgholzer, and Robert Wille.
\newblock Equivalence checking of parameterized quantum circuits: Verifying the
  compilation of variational quantum algorithms.
\newblock In {\em 2023 28th Asia and South Pacific Design Automation Conference
  (ASP-DAC)}, pages 702--708, 2023.

\bibitem{mqt2023quetschlich}
Nils Quetschlich, Lukas Burgholzer, and Robert Wille.
\newblock Mqt bench: Benchmarking software and design automation tools for
  quantum computing.
\newblock {\em Quantum}, 7:1062, July 2023.

\bibitem{quine1952problem}
Willard~V Quine.
\newblock The problem of simplifying truth functions.
\newblock {\em The American mathematical monthly}, 59(8):521--531, 1952.

\bibitem{raussendorf2001oneway}
Robert Raussendorf and Hans~J. Briegel.
\newblock A one-way quantum computer.
\newblock {\em Phys. Rev. Lett.}, 86:5188--5191, May 2001.

\bibitem{sang2004combiningcc}
Tian Sang, Fahiem Bacchus, Paul Beame, Henry~A. Kautz, and Toniann Pitassi.
\newblock Combining component caching and clause learning for effective model
  counting.
\newblock In {\em International Conference on Theory and Applications of
  Satisfiability Testing}, 2004.

\bibitem{shor1995scheme}
Peter~W Shor.
\newblock Scheme for reducing decoherence in quantum computer memory.
\newblock {\em Physical review A}, 52(4):R2493, 1995.

\bibitem{steane1996error}
Andrew~M Steane.
\newblock Error correcting codes in quantum theory.
\newblock {\em Physical Review Letters}, 77(5):793, 1996.

\bibitem{hashimoto2020gpmc}
Ryosuke Suzuki, Kenji Hashimoto, and Masahiko Sakai.
\newblock {Improvement of Projected Model-Counting Solver with Component
  Decomposition Using SAT Solving in Components}.
\newblock Technical report, JSAI Technical Report, SIG-FPAI-103-B506, 2017 (in
  Japanese).

\bibitem{tanaka2010exact}
Yu~Tanaka.
\newblock Exact non-identity check is nqp-complete.
\newblock {\em International Journal of Quantum Information}, 8(05):807--819,
  2010.

\bibitem{thanos2023fast}
Dimitrios Thanos, Tim Coopmans, and Alfons Laarman.
\newblock Fast equivalence checking of quantum circuits of {Clifford} gates.
\newblock In {\'E}tienne Andr{\'e} and Jun Sun, editors, {\em Automated
  Technology for Verification and Analysis}, pages 199--216, Cham, 2023.
  Springer Nature Switzerland.

\bibitem{van2010classical}
Maarten van~den Nest.
\newblock Classical simulation of quantum computation, the gottesman-knill
  theorem, and slightly beyond.
\newblock {\em Quantum Information \& Computation}, 10(3):258--271, 2010.

\bibitem{viamontes2007equivalence}
George~F. Viamontes, Igor~L. Markov, and John~P. Hayes.
\newblock Checking equivalence of quantum circuits and states.
\newblock In {\em 2007 IEEE/ACM International Conference on Computer-Aided
  Design}, pages 69--74, 2007.

\bibitem{wang2008xqdd}
Shiou-An Wang, Chin-Yung Lu, I-Ming Tsai, and Sy-Yen Kuo.
\newblock An {XQDD}-based verification method for quantum circuits.
\newblock {\em IEICE transactions on fundamentals of electronics,
  communications and computer sciences}, 91(2):584--594, 2008.

\bibitem{wei2022accurate}
Chun-Yu Wei, Yuan-Hung Tsai, Chiao-Shan Jhang, and Jie-Hong~R Jiang.
\newblock Accurate {BDD}-based unitary operator manipulation for scalable and
  robust quantum circuit verification.
\newblock In {\em Proceedings of the 59th ACM/IEEE Design Automation
  Conference}, pages 523--528, 2022.

\bibitem{robert2011atpg}
Robert Wille, Hongyan Zhang, and Rolf Drechsler.
\newblock {ATPG} for reversible circuits using simulation, {Boolean}
  satisfiability, and pseudo {Boolean} optimization.
\newblock In {\em 2011 IEEE Computer Society Annual Symposium on VLSI}, pages
  120--125, 2011.

\bibitem{wright2022chemistry}
Jerimiah Wright, Meenambika Gowrishankar, Daniel Claudino, Phillip~C. Lotshaw,
  Thien Nguyen, Alexander~J. McCaskey, and Travis~S. Humble.
\newblock Numerical simulations of noisy quantum circuits for computational
  chemistry.
\newblock {\em Materials Theory}, 6(1):18, 2022.

\bibitem{xu2009satzilla2009}
Lin Xu, Frank Hutter, Holger~H Hoos, and Kevin Leyton-Brown.
\newblock Satzilla2009: an automatic algorithm portfolio for sat.
\newblock {\em SAT}, 4:53--55, 2009.

\bibitem{yamashita2010fast}
Shigeru Yamashita and Igor~L Markov.
\newblock Fast equivalence-checking for quantum circuits.
\newblock In {\em 2010 IEEE/ACM International Symposium on Nanoscale
  Architectures}, pages 23--28. IEEE, 2010.

\bibitem{efficient2021zhang}
Yihong Zhang, Yifan Tang, You Zhou, and Xiongfeng Ma.
\newblock Efficient entanglement generation and detection of generalized
  stabilizer states.
\newblock {\em Phys. Rev. A}, 103:052426, May 2021.

\bibitem{zulehner2017improving}
Alwin Zulehner and Robert Wille.
\newblock Improving synthesis of reversible circuits: Exploiting redundancies
  in paths and nodes of qmdds.
\newblock In {\em Reversible Computation: 9th International Conference, RC
  2017, Kolkata, India, July 6-7, 2017, Proceedings 9}, pages 232--247.
  Springer, 2017.

\end{thebibliography}

\newpage
\appendix
\section{Additional Experimental Results}
\label{appendix}

\autoref{tab:full1}, \autoref{tab:full2} and \autoref{tab:full3} provide additional experimental results.
For some benchmarks,
the optimized circuits do not contain CNOT or rotation gate,
so we do not include them in the table for non-equivalent instances.

\begin{table}[ht]
  \setlength{\tabcolsep}{6pt} %
   \renewcommand{\arraystretch}{.9} %
     \caption{Results of verifying circuits from MQT bench. 
     For cases within time limit,
     we give their running time (sec),
     where $>300$ represents a timeout (5 min).
     }
     \label{tab:full1}
    \centering
    \small
    \begin{tabular}{ c || r r r || c | c | c}
      \hline
{Algorithm} & {n} & {$|G|$} & {$G'$}  & {\ecmc} &  {\qcec (DD)} & \qcec  \\
      \hline
routing & 2 & 43 & 29 & 0.17 & 0.01 & 0.01\\ \hline
routing & 6 & 135 & 142 & 161.17 & 0.02 & 0.01\\ \hline
routing & 12 & 273 & 409 & \timeout & 0.2 & 0.04\\ \hline
ae & 16 & 830 & 802 & \timeout & \timeout & \noinfo\\ \hline
ae & 32 & 2950 & 2862 & \timeout & \timeout & \timeout\\ \hline
ae & 64 & 11030 & 7728 & \timeout & \timeout & \timeout\\ \hline
dj & 16 & 127 & 67 & 0.16 & 0.03 & 0.01\\ \hline
dj & 32 & 249 & 129 & 0.51 & 0.03 & 0.01\\ \hline
dj & 64 & 507 & 259 & 2.09 & 0.08 & 0.03\\ \hline
ghz & 16 & 18 & 46 & 0.09 & 0.02 & 0.01\\ \hline
ghz & 32 & 34 & 94 & 0.24 & 0.03 & 0.01\\ \hline
ghz & 64 & 66 & 190 & 0.82 & 0.04 & 0.02\\ \hline
groundstate & 14 & 1610 & 206 & \timeout & 0.26 & 0.26\\ \hline
groundstate & 12 & 1212 & 164 & \timeout & 0.08 & 0.06\\ \hline
groundstate & 4 & 180 & 36 & 0.93 & 0.01 & 0.01\\ \hline
grover-v-chain & 5 & 529 & 632 & 250.88 & 0.1 & 0.05\\ \hline
grover-v-chain & 7 & 1224 & 1627 & \timeout & 39.6 & 0.18\\ \hline
grover-v-chain & 9 & 3187 & 4815 & \timeout & \timeout & 1.52\\ \hline
portfolioqaoa & 5 & 195 & 236 & 53.32 & 0.05 & \noinfo\\ \hline
portfolioqaoa & 6 & 261 & 356 & \timeout & 1.52 & \noinfo\\ \hline
portfolioqaoa & 7 & 336 & 481 & \timeout & 2.12 & \noinfo\\ \hline
portfoliovqe & 5 & 310 & 131 & 71.64 & 0.03 & 0.01\\ \hline
portfoliovqe & 6 & 435 & 151 & \timeout & 0.12 & \noinfo\\ \hline
portfoliovqe & 7 & 581 & 218 & \timeout & 0.05 & \noinfo\\ \hline
pricingcall & 5 & 240 & 166 & 1.1 & 0.02 & \noinfo\\ \hline
pricingcall & 7 & 422 & 277 & 7.45 & 0.05 & \noinfo\\ \hline
pricingcall & 9 & 624 & 396 & 46.69 & 0.21 & \noinfo\\ \hline
pricingput & 5 & 240 & 192 & 0.71 & 0.02 & \noinfo\\ \hline
pricingput & 7 & 432 & 297 & 12.94 & 0.06 & \noinfo\\ \hline
pricingput & 9 & 654 & 428 & 95.26 & 0.29 & \noinfo\\ \hline
qftentangled & 16 & 690 & 853 & \timeout & 46.27 & 0.11\\ \hline
qftentangled & 32 & 2658 & 3068 & \timeout & \timeout & \timeout\\ \hline
qftentangled & 64 & 10434 & 9387 & \timeout & \timeout & \timeout\\ \hline
qwalk-noancilla & 6 & 2457 & 3003 & \timeout & 0.39 & 0.36\\ \hline
qwalk-noancilla & 7 & 4761 & 6231 & \timeout & 1.28 & 1.31\\ \hline
qwalk-noancilla & 8 & 9369 & 12486 & \timeout & \timeout & \noinfo\\ \hline
qwalk-v-chain & 5 & 417 & 398 & 64.0 & 0.04 & 0.02\\ \hline
qwalk-v-chain & 7 & 849 & 840 & \timeout & 0.17 & 0.15\\ \hline
qwalk-v-chain & 9 & 1437 & 1500 & \timeout & 1.67 & 0.61\\ \hline
realamprandom & 16 & 680 & 679 & \timeout & 241.57 & \noinfo\\ \hline
realamprandom & 32 & 2128 & 2215 & \timeout & \timeout & \timeout\\ \hline
realamprandom & 64 & 7328 & 7411 & \timeout & \timeout & \timeout\\ \hline
     \end{tabular}

  \end{table}

  \begin{table}[!ht]
  \setlength{\tabcolsep}{1pt} %
     \caption{Results of verifying non-equivalence of circuits from MQT bench against optimized circuits with flipped CNOT gate (Flipped) and one missing gate (1 Gate missing). 
     For cases within time limit,
     we give their running time (sec),
     where $>300$ represents a timeout (5 min).
     }
     \label{tab:full2}
    \centering
    \scriptsize
    \begin{tabular}{ c || r r r | c | c | c | c | c | c }
      \hline
  {Algorithm}& {n} & {$|G|$}  & {$|G'|$} & \multicolumn{3}{c|}{Flipped}  &  \multicolumn{3}{c}{1 Gate missing}  \\
         &  &  &  & {\ecmc} &  {\qcec} (DD) & \qcec & {\ecmc} &  {\qcec} (DD)  & \qcec \\
     \hline
ae & 16 & 830 & 802 & 84.56 & 137.32 & 169.62 & 3.9 & 140.86 & 169.62\\ \hline
ae & 32 & 2950 & 2862 & \timeout & \timeout & \timeout & \timeout & \timeout & \timeout\\ \hline
ae & 64 & 11030 & 7728 & \timeout & \timeout & \timeout & \timeout & \timeout & \timeout\\ \hline
dj & 16 & 127 & 67 & 0.08 & 0.02 & 0.01 & 0.15 & 0.03 & 0.01\\ \hline
dj & 32 & 249 & 129 & 0.26 & 0.03 & 0.02 & 0.27 & 0.03 & 0.02\\ \hline
dj & 64 & 507 & 259 & 0.28 & 0.06 & 0.04 & 1.32 & 0.06 & 0.04\\ \hline
groundstate & 14 & 1610 & 206 & \timeout & 0.32 & 0.23 & \timeout & 0.87 & 0.23\\ \hline
groundstate & 12 & 1212 & 164 & \timeout & 0.1 & 0.07 & \timeout & 0.08 & 0.07\\ \hline
grover-v-chain & 5 & 529 & 632 & 215.42 & 0.06 & 0.02 & 269.69 & 0.07 & 0.02\\ \hline
grover-v-chain & 7 & 1224 & 1627 & \timeout & 0.11 & 0.05 & \timeout & 0.06 & 0.05\\ \hline
grover-v-chain & 9 & 3187 & 4815 & \timeout & 0.69 & 1.25 & \timeout & 0.96 & 1.25\\ \hline
portfolioqaoa & 5 & 195 & 236 & 39.88 & 0.02 & 0.01 & 39.56 & 0.02 & 0.01\\ \hline
portfolioqaoa & 6 & 261 & 356 & \timeout & 0.12 & 0.02 & \timeout & 0.03 & 0.02\\ \hline
portfolioqaoa & 7 & 336 & 481 & \timeout & 0.12 & 0.04 & \timeout & 0.16 & 0.04\\ \hline
portfoliovqe & 5 & 310 & 131 & 50.17 & 0.02 & 0.01 & 55.71 & 0.02 & 0.01\\ \hline
portfoliovqe & 6 & 435 & 151 & \timeout & 0.03 & 0.02 & \timeout & 0.03 & 0.02\\ \hline
portfoliovqe & 7 & 581 & 218 & \timeout & 0.05 & 0.02 & \timeout & 0.04 & 0.02\\ \hline
pricingcall & 5 & 240 & 166 & 0.67 & 0.03 & 0.01 & 0.52 & 0.03 & 0.01\\ \hline
pricingcall & 7 & 422 & 277 & 5.91 & 0.04 & 0.02 & 5.81 & 0.04 & 0.02\\ \hline
pricingcall & 9 & 624 & 396 & 55.92 & 0.07 & 0.05 & 57.47 & 0.07 & 0.05\\ \hline
pricingput & 5 & 240 & 192 & 0.62 & 0.03 & 0.01 & 0.17 & 0.04 & 0.01\\ \hline
pricingput & 7 & 432 & 297 & 11.25 & 0.04 & 0.02 & 11.6 & 0.05 & 0.02\\ \hline
pricingput & 9 & 654 & 428 & 115.43 & 0.1 & 0.07 & 243.27 & 0.12 & 0.07\\ \hline
qftentangled & 16 & 690 & 853 & 2.17 & 69.98 & 16.48 & 0.79 & 16.13 & 16.48\\ \hline
qftentangled & 32 & 2658 & 3068 & 4.93 & \timeout & \timeout & 5.65 & \timeout & \timeout\\ \hline
qftentangled & 64 & 10434 & 9387 & \timeout & \timeout & \timeout & 32.62 & \timeout & \timeout\\ \hline
qwalk-noancilla & 6 & 2457 & 3003 & \timeout & 0.13 & 0.09 & \timeout & 0.13 & 0.09\\ \hline
qwalk-noancilla & 7 & 4761 & 6231 & \timeout & 0.24 & 0.27 & \timeout & 0.25 & 0.27\\ \hline
qwalk-noancilla & 8 & 9369 & 12486 & \timeout & 1.13 & 1.37 & \timeout & 0.87 & 1.37\\ \hline
qwalk-v-chain & 5 & 417 & 398 & 56.83 & 0.04 & 0.02 & 75.87 & 0.04 & 0.02\\ \hline
qwalk-v-chain & 7 & 849 & 840 & \timeout & 0.04 & 0.04 & \timeout & 0.04 & 0.04\\ \hline
qwalk-v-chain & 9 & 1437 & 1500 & \timeout & 0.12 & 0.13 & \timeout & 0.16 & 0.13\\ \hline
realamprandom & 16 & 680 & 679 & 6.05 & 126.81 & 153.11 & \timeout & 128.76 & 153.11\\ \hline
realamprandom & 32 & 2128 & 2215 & \timeout & \timeout & \timeout & 16.14 & \timeout & \timeout\\ \hline
realamprandom & 64 & 7328 & 7411 & \timeout & \timeout & \timeout & \timeout & \timeout & \timeout\\ \hline
tsp & 16 & 1005 & 623 & \timeout & 33.37 & 29.94 & \timeout & 32.48 & 29.94\\ \hline
tsp & 4 & 225 & 86 & 5.1 & 0.02 & 0.01 & 5.87 & 0.02 & 0.01\\ \hline
tsp & 9 & 550 & 315 & \timeout & 0.08 & 0.06 & \timeout & 0.18 & 0.06\\ \hline
twolocalrandom & 16 & 680 & 679 & 8.92 & 126.12 & 111.84 & \timeout & 122.51 & 111.84\\ \hline
twolocalrandom & 32 & 2128 & 2215 & \timeout & \timeout & \timeout & 16.62 & \timeout & \timeout\\ \hline

    \end{tabular}
  \end{table}

\begin{table}[t!]
    \caption{Results of verifying non-equivalence of circuits from MQT bench against optimized circuits with $10^{-4}$ size and $10^{-7}$ size phase shift in one random rotation gate. 
     For cases within time limit,
     we give their running time (sec),
     where $>300$ represents a timeout (5 min) and "wrong result" represents when the result is equivalence.
     }
    \label{tab:full3}
    \scriptsize
  \setlength{\tabcolsep}{1pt} %
    \centering
    \begin{tabular}{ c | r r r| c c c | c c c}
      \hline
    \multirowcell{2}{Algorithm}& \multirowcell{2}{n} & \multirowcell{2}{$|G|$} & \multirowcell{2}{$|G'|$}  & \multicolumn{3}{c|}{Shift-$10^{-4}$}  &  \multicolumn{3}{c}{Shift-$10^{-7}$} \\
       &  & &  &  {\ecmc} & {QCEC} (DD) & {QCEC}  &  {\ecmc} & {QCEC (DD)}  & {QCEC} \\
       \hline
dj & 16 & 127 & 67 & 0.17 & 0.03 & \noinfo & \ERROR & \ERROR & 0.01\\ \hline
dj & 32 & 249 & 129 & 0.77 & \ERROR & \noinfo & \ERROR & \ERROR & \noinfo\\ \hline
dj & 64 & 507 & 259 & 1.38 & 0.09 & \noinfo & \ERROR & \ERROR & \noinfo\\ \hline
groundstate & 14 & 1610 & 206 & \timeout & \ERROR & \ERROR & \timeout & \ERROR & \ERROR\\ \hline
groundstate & 12 & 1212 & 164 & \timeout & \ERROR & \ERROR & \timeout & \ERROR & \ERROR\\ \hline
groundstate & 4 & 180 & 36 & 0.26 & \ERROR & \ERROR & \ERROR & \ERROR & \ERROR\\ \hline
grover-v-chain & 5 & 529 & 632 & 276.69 & \ERROR & \ERROR & 285.93 & \ERROR & \noinfo\\ \hline
grover-v-chain & 7 & 1224 & 1627 & \timeout & 82.38 & \noinfo & \timeout & \ERROR & \noinfo\\ \hline
grover-v-chain & 9 & 3187 & 4815 & \timeout & \timeout & \noinfo & \timeout & \timeout & \noinfo\\ \hline
portfolioqaoa & 5 & 195 & 236 & 39.22 & 0.12 & \noinfo & \ERROR & \ERROR & \noinfo\\ \hline
portfolioqaoa & 6 & 261 & 356 & \timeout & 2.16 & \noinfo & \timeout & \ERROR & \noinfo\\ \hline
portfolioqaoa & 7 & 336 & 481 & \timeout & 8.75 & \noinfo & \timeout & \ERROR & \noinfo\\ \hline
portfoliovqe & 5 & 310 & 131 & 45.57 & \ERROR & \noinfo & \ERROR & \ERROR & \noinfo\\ \hline
portfoliovqe & 6 & 435 & 151 & \timeout & 0.07 & \noinfo & \timeout & \ERROR & \noinfo\\ \hline
portfoliovqe & 7 & 581 & 218 & \timeout & \ERROR & \noinfo & \timeout & \ERROR & \noinfo\\ \hline
pricingcall & 5 & 240 & 166 & 0.63 & \ERROR & \ERROR & 0.9 & \ERROR & \noinfo\\ \hline
pricingcall & 7 & 422 & 277 & 7.46 & 0.06 & 0.04 & 7.7 & \ERROR & \noinfo\\ \hline
pricingcall & 9 & 624 & 396 & 51.71 & 0.22 & \noinfo & 52.05 & \ERROR & \noinfo\\ \hline
pricingput & 5 & 240 & 192 & 0.66 & \ERROR & \noinfo & 0.65 & \ERROR & \noinfo\\ \hline
pricingput & 7 & 432 & 297 & 9.9 & \ERROR & \noinfo & 13.0 & \ERROR & \noinfo\\ \hline
pricingput & 9 & 654 & 428 & 69.41 & 0.31 & \noinfo & 95.91 & \ERROR & \noinfo\\ \hline
qftentangled & 16 & 690 & 853 & 0.78 & \ERROR & \ERROR & \timeout & \ERROR & \ERROR\\ \hline
qftentangled & 32 & 2658 & 3068 & 9.87 & \timeout & \timeout & \timeout & \timeout & \timeout\\ \hline
qftentangled & 64 & 10434 & 9387 & \timeout & \timeout & \timeout & \timeout & \timeout & \timeout\\ \hline
qwalk-noancilla & 6 & 2457 & 3003 & \timeout & \ERROR & \ERROR & \timeout & \ERROR & \ERROR\\ \hline
qwalk-noancilla & 7 & 4761 & 6231 & \timeout & 1.28 & 1.33 & \timeout & 7.95 & 4.64\\ \hline
qwalk-noancilla & 8 & 9369 & 12486 & \timeout & \timeout & \noinfo & \timeout & \timeout & \noinfo\\ \hline
qwalk-v-chain & 5 & 417 & 398 & 63.4 & \ERROR & \ERROR & 62.9 & \ERROR & \ERROR\\ \hline
qwalk-v-chain & 7 & 849 & 840 & \timeout & \ERROR & \noinfo & \timeout & \ERROR & \ERROR\\ \hline
qwalk-v-chain & 9 & 1437 & 1500 & \timeout & \timeout & \noinfo & \timeout & \ERROR & \noinfo\\ \hline
realamprandom & 16 & 680 & 679 & \timeout & \ERROR & \noinfo & \timeout & \ERROR & \noinfo\\ \hline
realamprandom & 32 & 2128 & 2215 & \timeout & \timeout & \timeout & \timeout & \timeout & \timeout\\ \hline
realamprandom & 64 & 7328 & 7411 & \timeout & \timeout & \timeout & \timeout & \timeout & \timeout\\ \hline
su2random & 16 & 744 & 378 & \timeout & \timeout & \noinfo & \timeout & \timeout & \noinfo\\ \hline
su2random & 32 & 2256 & 762 & \timeout & \timeout & \timeout & \timeout & \timeout & \timeout\\ \hline
su2random & 64 & 7584 & 1530 & \timeout & \timeout & \timeout & \timeout & \timeout & \timeout\\ \hline
tsp & 16 & 1005 & 623 & \timeout & \timeout & \noinfo & \timeout & \timeout & \noinfo\\ \hline
tsp & 4 & 225 & 86 & 4.13 & \ERROR & \noinfo & \ERROR & \ERROR & \noinfo\\ \hline
tsp & 9 & 550 & 315 & \timeout & 4.14 & \noinfo & \timeout & \ERROR & \noinfo\\ \hline
twolocalrandom & 16 & 680 & 679 & \timeout & \ERROR & \noinfo & \timeout & \ERROR & \noinfo\\ \hline
twolocalrandom & 32 & 2128 & 2215 & \timeout & \timeout & \timeout & \timeout & \timeout & \timeout\\ \hline
twolocalrandom & 64 & 7328 & 7411 & \timeout & \timeout & \timeout & \timeout & \timeout & \timeout\\ \hline
     \end{tabular}
  \end{table}

\section{Toffoli Encoding Table}
\label{app:toffoli}

\autoref{tab:toffoli-full} provides the brute forced results of the Toffoli gate's behavior in the Pauli basis.

  \begin{table}[ht!]
  \setlength{\tabcolsep}{10pt} %
   \renewcommand{\arraystretch}{.98} %
     \caption{Pauli operator lookup table for the Toffoli gate for in/output $\hP$~and~$\hQ$.}
     \label{tab:toffoli-full}
    \centering
    \scriptsize
    \begin{tabular}{ >{\centering\arraybackslash}p{1.8cm} | l }

\toprule
 $\hP\in \mathcal{\hat P}_3$ & $\hQ= \mathrm{Toffoli} \cdot \hP \cdot  \mathrm{Toffoli}^\dagger $ with $\hQ\in \frac 12  \sum_{i\in [4]} \mathcal{\hat P}_3 $ or $\hQ\in \mathcal{\hat P}_3$  \\
\midrule
  $ I \otimes I \otimes I $ & $   I \otimes I \otimes I $ \\
$ I \otimes I \otimes Z $ & $(   I \otimes I \otimes Z +   I \otimes Z \otimes Z +   Z \otimes I \otimes Z - Z \otimes Z \otimes Z ) / 2$ \\
$ I \otimes I \otimes X $ & $   I \otimes I \otimes X $ \\
$ I \otimes I \otimes Y $ & $(   I \otimes I \otimes Y +   I \otimes Z \otimes Y +   Z \otimes I \otimes Y - Z \otimes Z \otimes Y ) / 2$ \\
$ I \otimes Z \otimes I $ & $   I \otimes Z \otimes I $ \\
$ I \otimes Z \otimes Z $ & $(   I \otimes I \otimes Z +   I \otimes Z \otimes Z +   Z \otimes Z \otimes Z - Z \otimes I \otimes Z ) / 2$ \\
$ I \otimes Z \otimes X $ & $   I \otimes Z \otimes X $ \\
$ I \otimes Z \otimes Y $ & $(   I \otimes I \otimes Y +   I \otimes Z \otimes Y +   Z \otimes Z \otimes Y - Z \otimes I \otimes Y ) / 2$ \\
$ I \otimes X \otimes I $ & $(   I \otimes X \otimes I +   I \otimes X \otimes X +   Z \otimes X \otimes I - Z \otimes X \otimes X ) / 2$ \\
$ I \otimes X \otimes Z $ & $(   I \otimes X \otimes Z +   Z \otimes X \otimes Z +   Z \otimes Y \otimes Y - I \otimes Y \otimes Y ) / 2$ \\
$ I \otimes X \otimes X $ & $(   I \otimes X \otimes I +   I \otimes X \otimes X +   Z \otimes X \otimes X - Z \otimes X \otimes I ) / 2$ \\
$ I \otimes X \otimes Y $ & $(   I \otimes X \otimes Y +   I \otimes Y \otimes Z +   Z \otimes X \otimes Y - Z \otimes Y \otimes Z ) / 2$ \\
$ I \otimes Y \otimes I $ & $(   I \otimes Y \otimes I +   I \otimes Y \otimes X +   Z \otimes Y \otimes I - Z \otimes Y \otimes X ) / 2$ \\
$ I \otimes Y \otimes Z $ & $(   I \otimes X \otimes Y +   I \otimes Y \otimes Z +   Z \otimes Y \otimes Z - Z \otimes X \otimes Y ) / 2$ \\
$ I \otimes Y \otimes X $ & $(   I \otimes Y \otimes I +   I \otimes Y \otimes X +   Z \otimes Y \otimes X - Z \otimes Y \otimes I ) / 2$ \\
$ I \otimes Y \otimes Y $ & $(   I \otimes Y \otimes Y +   Z \otimes X \otimes Z +   Z \otimes Y \otimes Y - I \otimes X \otimes Z ) / 2$ \\
$ Z \otimes I \otimes I $ & $   Z \otimes I \otimes I $ \\
$ Z \otimes I \otimes Z $ & $(   I \otimes I \otimes Z +   Z \otimes I \otimes Z +   Z \otimes Z \otimes Z - I \otimes Z \otimes Z ) / 2$ \\
$ Z \otimes I \otimes X $ & $   Z \otimes I \otimes X $ \\
$ Z \otimes I \otimes Y $ & $(   I \otimes I \otimes Y +   Z \otimes I \otimes Y +   Z \otimes Z \otimes Y - I \otimes Z \otimes Y ) / 2$ \\
$ Z \otimes Z \otimes I $ & $   Z \otimes Z \otimes I $ \\
$ Z \otimes Z \otimes Z $ & $(   I \otimes Z \otimes Z +   Z \otimes I \otimes Z +   Z \otimes Z \otimes Z - I \otimes I \otimes Z ) / 2$ \\
$ Z \otimes Z \otimes X $ & $   Z \otimes Z \otimes X $ \\
$ Z \otimes Z \otimes Y $ & $(   I \otimes Z \otimes Y +   Z \otimes I \otimes Y +   Z \otimes Z \otimes Y - I \otimes I \otimes Y ) / 2$ \\
$ Z \otimes X \otimes I $ & $(   I \otimes X \otimes I +   Z \otimes X \otimes I +   Z \otimes X \otimes X - I \otimes X \otimes X ) / 2$ \\
$ Z \otimes X \otimes Z $ & $(   I \otimes X \otimes Z +   I \otimes Y \otimes Y +   Z \otimes X \otimes Z - Z \otimes Y \otimes Y ) / 2$ \\
$ Z \otimes X \otimes X $ & $(   I \otimes X \otimes X +   Z \otimes X \otimes I +   Z \otimes X \otimes X - I \otimes X \otimes I ) / 2$ \\
$ Z \otimes X \otimes Y $ & $(   I \otimes X \otimes Y +   Z \otimes X \otimes Y +   Z \otimes Y \otimes Z - I \otimes Y \otimes Z ) / 2$ \\
$ Z \otimes Y \otimes I $ & $(   I \otimes Y \otimes I +   Z \otimes Y \otimes I +   Z \otimes Y \otimes X - I \otimes Y \otimes X ) / 2$ \\
$ Z \otimes Y \otimes Z $ & $(   I \otimes Y \otimes Z +   Z \otimes X \otimes Y +   Z \otimes Y \otimes Z - I \otimes X \otimes Y ) / 2$ \\
$ Z \otimes Y \otimes X $ & $(   I \otimes Y \otimes X +   Z \otimes Y \otimes I +   Z \otimes Y \otimes X - I \otimes Y \otimes I ) / 2$ \\
$ Z \otimes Y \otimes Y $ & $(   I \otimes X \otimes Z +   I \otimes Y \otimes Y +   Z \otimes Y \otimes Y - Z \otimes X \otimes Z ) / 2$ \\
$ X \otimes I \otimes I $ & $(   X \otimes I \otimes I +   X \otimes I \otimes X +   X \otimes Z \otimes I - X \otimes Z \otimes X ) / 2$ \\
$ X \otimes I \otimes Z $ & $(   X \otimes I \otimes Z +   X \otimes Z \otimes Z +   Y \otimes Z \otimes Y - Y \otimes I \otimes Y ) / 2$ \\
$ X \otimes I \otimes X $ & $(   X \otimes I \otimes I +   X \otimes I \otimes X +   X \otimes Z \otimes X - X \otimes Z \otimes I ) / 2$ \\
$ X \otimes I \otimes Y $ & $(   X \otimes I \otimes Y +   X \otimes Z \otimes Y +   Y \otimes I \otimes Z - Y \otimes Z \otimes Z ) / 2$ \\
$ X \otimes Z \otimes I $ & $(   X \otimes I \otimes I +   X \otimes Z \otimes I +   X \otimes Z \otimes X - X \otimes I \otimes X ) / 2$ \\
$ X \otimes Z \otimes Z $ & $(   X \otimes I \otimes Z +   X \otimes Z \otimes Z +   Y \otimes I \otimes Y - Y \otimes Z \otimes Y ) / 2$ \\
$ X \otimes Z \otimes X $ & $(   X \otimes I \otimes X +   X \otimes Z \otimes I +   X \otimes Z \otimes X - X \otimes I \otimes I ) / 2$ \\
$ X \otimes Z \otimes Y $ & $(   X \otimes I \otimes Y +   X \otimes Z \otimes Y +   Y \otimes Z \otimes Z - Y \otimes I \otimes Z ) / 2$ \\
$ X \otimes X \otimes I $ & $(   X \otimes X \otimes I +   X \otimes X \otimes X +   Y \otimes Y \otimes I - Y \otimes Y \otimes X ) / 2$ \\
$ X \otimes X \otimes Z $ & $(   X \otimes X \otimes Z +   Y \otimes Y \otimes Z - X \otimes Y \otimes Y - Y \otimes X \otimes Y ) / 2$ \\
$ X \otimes X \otimes X $ & $(   X \otimes X \otimes I +   X \otimes X \otimes X +   Y \otimes Y \otimes X - Y \otimes Y \otimes I ) / 2$ \\
$ X \otimes X \otimes Y $ & $(   X \otimes X \otimes Y +   X \otimes Y \otimes Z +   Y \otimes X \otimes Z +   Y \otimes Y \otimes Y ) / 2$ \\
$ X \otimes Y \otimes I $ & $(   X \otimes Y \otimes I +   X \otimes Y \otimes X +   Y \otimes X \otimes X - Y \otimes X \otimes I ) / 2$ \\
$ X \otimes Y \otimes Z $ & $(   X \otimes X \otimes Y +   X \otimes Y \otimes Z - Y \otimes X \otimes Z - Y \otimes Y \otimes Y ) / 2$ \\
$ X \otimes Y \otimes X $ & $(   X \otimes Y \otimes I +   X \otimes Y \otimes X +   Y \otimes X \otimes I - Y \otimes X \otimes X ) / 2$ \\
$ X \otimes Y \otimes Y $ & $(   X \otimes Y \otimes Y +   Y \otimes Y \otimes Z - X \otimes X \otimes Z - Y \otimes X \otimes Y ) / 2$ \\
$ Y \otimes I \otimes I $ & $(   Y \otimes I \otimes I +   Y \otimes I \otimes X +   Y \otimes Z \otimes I - Y \otimes Z \otimes X ) / 2$ \\
$ Y \otimes I \otimes Z $ & $(   X \otimes I \otimes Y +   Y \otimes I \otimes Z +   Y \otimes Z \otimes Z - X \otimes Z \otimes Y ) / 2$ \\
$ Y \otimes I \otimes X $ & $(   Y \otimes I \otimes I +   Y \otimes I \otimes X +   Y \otimes Z \otimes X - Y \otimes Z \otimes I ) / 2$ \\
$ Y \otimes I \otimes Y $ & $(   X \otimes Z \otimes Z +   Y \otimes I \otimes Y +   Y \otimes Z \otimes Y - X \otimes I \otimes Z ) / 2$ \\
$ Y \otimes Z \otimes I $ & $(   Y \otimes I \otimes I +   Y \otimes Z \otimes I +   Y \otimes Z \otimes X - Y \otimes I \otimes X ) / 2$ \\
$ Y \otimes Z \otimes Z $ & $(   X \otimes Z \otimes Y +   Y \otimes I \otimes Z +   Y \otimes Z \otimes Z - X \otimes I \otimes Y ) / 2$ \\
$ Y \otimes Z \otimes X $ & $(   Y \otimes I \otimes X +   Y \otimes Z \otimes I +   Y \otimes Z \otimes X - Y \otimes I \otimes I ) / 2$ \\
$ Y \otimes Z \otimes Y $ & $(   X \otimes I \otimes Z +   Y \otimes I \otimes Y +   Y \otimes Z \otimes Y - X \otimes Z \otimes Z ) / 2$ \\
$ Y \otimes X \otimes I $ & $(   X \otimes Y \otimes X +   Y \otimes X \otimes I +   Y \otimes X \otimes X - X \otimes Y \otimes I ) / 2$ \\
$ Y \otimes X \otimes Z $ & $(   X \otimes X \otimes Y +   Y \otimes X \otimes Z - X \otimes Y \otimes Z - Y \otimes Y \otimes Y ) / 2$ \\
$ Y \otimes X \otimes X $ & $(   X \otimes Y \otimes I +   Y \otimes X \otimes I +   Y \otimes X \otimes X - X \otimes Y \otimes X ) / 2$ \\
$ Y \otimes X \otimes Y $ & $(   Y \otimes X \otimes Y +   Y \otimes Y \otimes Z - X \otimes X \otimes Z - X \otimes Y \otimes Y ) / 2$ \\
$ Y \otimes Y \otimes I $ & $(   X \otimes X \otimes I +   Y \otimes Y \otimes I +   Y \otimes Y \otimes X - X \otimes X \otimes X ) / 2$ \\
$ Y \otimes Y \otimes Z $ & $(   X \otimes X \otimes Z +   X \otimes Y \otimes Y +   Y \otimes X \otimes Y +   Y \otimes Y \otimes Z ) / 2$ \\
$ Y \otimes Y \otimes X $ & $(   X \otimes X \otimes X +   Y \otimes Y \otimes I +   Y \otimes Y \otimes X - X \otimes X \otimes I ) / 2$ \\
$ Y \otimes Y \otimes Y $ & $(   X \otimes X \otimes Y +   Y \otimes Y \otimes Y - X \otimes Y \otimes Z - Y \otimes X \otimes Z ) / 2$ \\

\bottomrule
    \end{tabular}
  \end{table}

\end{document}